\title{Stabilizability and Norm-Optimal Control Design subject to Sparsity Constraints}
\author{ \c{S}erban Sab\u{a}u and\thanks{\c{S}. Sab\u{a}u
is with the Electrical and Computer Engineering Dept.,
Univ. of Maryland  College Park, MD 20742-3285, USA
email: sserban@isr.umd.edu.}
{Nuno C. Martins \thanks{N. C. Martins is with the Faculty of the Electrical and Computer Engineering Dept. and  the Institute for Systems Research, University of Maryland, College Park, MD 20742-3285, USA.
email: nmartins@isr.umd.edu. This work is partially funded by NSF CPS grant CNS  0931878, AFOSR grant FA955009108538, ONR UMD-AppEl Center, and the Multiscale Systems Center, one of six research centers funded under the Focus Center Research Program. }}}
\renewcommand{\tilde}{\widetilde}
\newcommand{\FF}{{{\rm I \kern -0.2em R}}}
\newcommand{\RR}{{{\rm I \kern -0.2em R}}}
\newcommand{\hlinee}{\noindent \begin{tabular}{p{\textwidth}} \hline\ \\\end{tabular}}
\newcommand{\CC}{{{\mbox{\rm \hspace*{0.05ex}
\rule[.18ex]{.18ex}{1.24ex} \kern -.65em C}}}} 
\newcommand{\rank}{\operatorname{rank}}
\newcommand{\bea}{\begin{eqnarray}}
\newcommand{\eea}{\end{eqnarray}}
\newcommand{\diag}{\mathrm{diag}}
\newtheorem{theorem}{Theorem}[section]
\newtheorem{rem}[theorem]{Remark}
\newtheorem{prop}[theorem]{Proposition}
\newtheorem{defn}[theorem]{Definition}
\newtheorem{coro}[theorem]{Corollary}
\newtheorem{problem}[theorem]{Problem}
\newtheorem{assumption}[theorem]{Assumption}
\newcommand{\ba}{\left[ \begin{array}}
\newcommand{\baa}{\begin{array}}
\newcommand{\ea}{\end{array} \right]}
\newcommand{\eaa}{\end{array}}
\newcommand{\be}{\begin{equation}}
\newcommand{\ee}{\end{equation}}
\newcommand{\bb}{\begin{equation}\label}
\newcommand{\boC}{\mathbb{C}}
\newcommand{\la}{\lambda}
\newcommand{\La}{\Lambda}
\def\math#1{\ifmmode{#1} \else {$#1$}\fi}
\newcommand{\sg}{\ifmmode \Sigma \else $\Sigma$ \fi}
\date{}
\begin{document}
\maketitle

\begin{abstract}

Consider that a linear time-invariant (LTI) plant is given and that we wish to design a stabilizing controller for it.
 Admissible controllers are LTI  and must comply with a pre-selected sparsity pattern.
The sparsity pattern is assumed to be quadratically invariant (QI) with respect to the plant, which, from prior results, guarantees that there is a
convex parametrization of all admissible stabilizing controllers provided that an initial admissible stable stabilizing controller is provided. This paper addresses the previously unsolved problem
of determining necessary and sufficient conditions for the existence of an admissible stabilizing controller. 
The main idea is to cast the existence of such a controller as the feasibility of an exact model-matching problem with stability restrictions, which can be
tackled using existing methods. Furthermore, we show that, when it exists, the solution of the model-matching problem can be
used to compute an admissible stabilizing controller. This method also leads to a convex parametrization
that may be viewed as an extension of Youla's classical approach so as to incorporate sparsity constraints. 
Applications of this parametrization on the design of norm-optimal controllers via convex methods are also explored.   An illustrative example is provided, and a special case is discussed for which the exact model matching problem has a unique and 
easily computable solution.
\end{abstract}

\section{Introduction}

In this paper, we deal with the problem of  output--feedback stabilization for  linear  time--invariant (LTI) plants using sparsity-constrained LTI controllers\footnote{For an interpretation of sparsity constraints in terms
of the interconnection structure of distributed controller, see \cite[Section III B]{Rotkowitz:2012eq}.}. The sparsity constraints are specified by a binary matrix with the same number of rows and columns as the controller. More specifically, entries of the controller must be zero whenever the corresponding element of the constraint matrix is zero, and are unrestricted otherwise.

\subsection{Previous Results} 

The convex parametrization \cite{Youla:1976fs} proposed by Youla,  which spans  {\em all} LTI controllers that stabilize a prescribed LTI plant, popularized the so--called factorization approach \cite{V} to the analysis and synthesis of LTI feedback systems. The methods proposed in \cite{DESOER:1980vk} cast the search space in a ring, which provides additional insight and tools rooted on algebraic methods.  However useful in expressing
the design of norm-optimal controllers as convex programs, Youla's parametrization does not allow for sparsity constraints on the controller.  The recent work in \cite{Voulgaris:2000ek, Voulgaris:2001bt, Qi:2004kz, Bamieh:2005by} partially bridges this gap by identifying properties that the sparsity pattern of the plant and the one imposed on the controller must satisfy so that a convex parametrization of all stabilizing controllers may exist. 
These recently discovered methods spring from invariance principles that are valid in the presence of what the authors define as {\em funnel} causality, and their validity extended to the more
general class \cite{Rotkowitz:2010ta} of \emph{quadratically invariant} sparsity patterns \cite{Rotkowitz:2006kz, Rotkowitz:2006gi}. The invariance condition in \cite{Rotkowitz:2006kz, Rotkowitz:2006gi} can be readily checked via an algebraic test, which, if true, assures that if there exists a stable stabilizing controller that satisfies the sparsity constraint then the set of all sparsity-constrained stabilizing controllers admits a convex parametrization based on a modification of the one in \cite{ZAMES:1981wb}.  Subsequent work \cite{Sabau:2009fi} has provided another convex parametrization that is guaranteed to exist under quadratic invariance, provided that a stabilizing controller that satisfies the sparsity constraint is given, and unlike prior work is not required to be itself stable. It has also been shown recently \cite{Lessard:2011ur} that quadratic invariance of the set of controllers is necessary for the existence of the convex parametrization proposed in \cite{Rotkowitz:2006kz, Rotkowitz:2006gi}.

\subsection{Contributions of this paper:} 
The main results of this paper are motivated by the following problem. 

\begin{problem} \label{central_problem} Consider that an LTI plant and a commensurate quadratically invariant sparsity constraint are given. Is the plant stabilizable
by an LTI controller that satisfies the sparsity constraint? If one exists then compute it and give a convex parametrization of all stabilizing sparsity-constrained controllers.
\end{problem}

For a given plant, in this paper we establish necessary and sufficient conditions for the existence of a stabilizing LTI controller, subject to pre-specified
quadratically invariant sparsity constraints. If one exists then our analysis also provides a method to construct a stabilizing controller that respects the sparsity constraints.
Since all existing convex parametrizations presuppose prior knowledge of a stabilizing sparsity constrained controller \cite{Rotkowitz:2006kz}, our results bridge an
important gap in the design process.
 
In our solution method, the necessary and sufficient conditions mentioned in Problem~\ref{central_problem}  are cast as the {\em existence} of a certain doubly coprime factorization \cite{NETT:1984un, Khargonekar:1982vm} of the plant that has additional constraints on the factors. We  show that determining when such a factorization exists, and if so computing one, is equivalent to solving an exact model--matching problem with stability restrictions \cite{Wolo}. We also give a convex {\em Youla-like} parametrization of  the set of all sparsity constrained stabilizing controllers by imposing additional constraints on the Q-parameter that require that it satisfies a certain homogeneous system of linear equations over the field of transfer functions. Unlike prior parametrizations that require an initial stable stabilizing controller that satisfies the sparsity constraint, our Youla-like parametrization does not require an initial controller and it is valid even when the plant is non-strongly stabilizable.

\subsection{Paper organization:} Including the introduction, this paper has six sections. Section \ref{Sec:2} states definitions and preliminary results used throughout the paper, while Section \ref{Sec:3} reviews the notation and state of the art on design of sparsity constrained controllers. The main results of this paper are in Section \ref{Sec:4}, where we formulate the necessary and sufficient conditions for stabilizability as the existence of solutions to an exact model matching problem \cite{Wolo}. We also propose methods to compute a sparsity-constrained stabilizing controller, when one exists, along with a numerical example. In addition, we present an associated convex parametrization of all stabilizing sparsity-constrained controllers that is obtained by imposing subspace constraints on Youla's parameter.  These results are specialized in Section~\ref{Sec:5} to plants that admit a structured doubly coprime factorization that we  denominate  {\em Input/Output Decoupled}.  We show that this special factorization may simplify the application of our results and provide additional insights. The paper ends with conclusions in Section \ref{Sec:6}.

{\bf Comparison with prior publications by the authors:}  Some of the results presented here have been published in preliminary form in \cite{Sabau:2011dy} and \cite{WrongAllerton10}. In particular, parts of Sections \ref{Sec:3} and~\ref{Sec:4} have been discussed with less detail  in \cite{Sabau:2011dy}. The discussion in \cite{Sabau:2011dy}  assumes block partitioning of the matrices, while, in this paper, partitioning is assumed only in Section \ref{Sec:5}. In contrast with \cite{Sabau:2011dy}, where we provide two simple examples, in Section \ref{ANE} of this paper we provide a more involved example on how to construct a sparsity-constrained controller. An abridged version of Section~\ref{Sec:5} was discussed in \cite{WrongAllerton10} in which there was a technical flaw. More specifically, we later found that Lemma 3.1 of \cite{WrongAllerton10} is incorrect and a correct and detailed discussion is provided in Section~\ref{Sec:5} and Appendix-I of this paper.  This paper also establishes a strong connection between the approaches of Sections~\ref{Sec:4} and~\ref{Sec:5}.

\section{Preliminaries}
\label{Sec:2}

We focus on the standard feedback configuration of Fig. \ref{2Block}, where $G$ is an LTI plant and $K$ is an LTI controller that are finite dimensional and operate in either continuous or discrete--time. Here, $\nu_1$ and $\nu_2$ are the input disturbance and sensor noise, respectively. In addition, $u$  and $y$ are the control and measurable output vectors, respectively.

\begin{figure}
\begin{centering}
\includegraphics{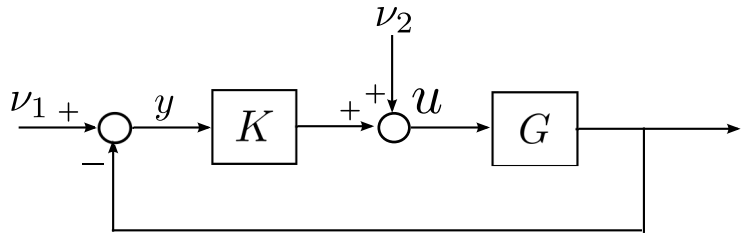}
\caption{Standard unity feedback configuration}
\label{2Block} 
\end{centering}
\end{figure}

\newpage
We adopt the following notation:

\begin{tabular}{l l}
& \\
 $\mathbb{R}(\la) $ & Set of all real--rational transfer functions.  \\
 $\mathbb{R}(\la)^{n \times q} $ & Set of $n \times q$ matrices with entries in $\mathbb{R} (\la)$ .\\
 TFM & Transfer function matrix, or, equivalently, $\bigcup_{n,q} \mathbb{R}(\la)^{n \times q}$.\\
 $\Omega$ & Stability region for TFMs.\\
 $\mathbb{A} (\la)$ & Subset of $\mathbb{R}(\la) $ whose entries have poles in $\Omega$.\\
 $\mathbb{A}(\la)^{n \times q} $ & Set of $n \times q$ stable TFMs.\\
 $\bar{\mathbb{A}}(\lambda)$ & Set of stable TFMs, or, equivalently, $\bigcup_{n,q} \mathbb{A}(\la)^{n \times q}$.\\
 $\mathbb{B}$ & The set $\{0,1\}$.\\
 & 
 \end{tabular}

We also adopt the following assumptions and conventions:

\begin{tabular}{l l} & \\
$m$ & Dimension of $y$. \\
$p$ & Dimension of $u$.\\
$G$ & The plant is a TFM with \underline{strictly proper} entries. \\
$K$ & The LTI controller is an element of $\mathbb{R}(\la)^{p \times m}$. \\
$H(G,K)$ & The TFM from $\displaystyle [\nu_1^T \; \; \nu_2^T\; \: ]^T$ to $\displaystyle [y^T \;\; u^T \;\: ]^T$. \\
$\otimes$ & Kronecker product.\\
$\overline{1,q}$ & $\{1, 2,\cdots, q\}$ \\
&
\end{tabular}

The indeterminate $\la$ is either $s$ for continuous--time or $z$ for discrete--time systems, respectively. The $\lambda$ argument of a TFM is \underline{often omitted} when its presence is clear from the context.  
 If the transfer matrix $H(G,K)$ is {\em stable} we say that $K$ is a {\em stabilizing controller} of $G$, or equivalently  that $K$ {\em stabilizes} $G$. If a stabilizing controller of $G$ exists, we say that $G$ is {\em stabilizable}.


\subsection{Coprime and Doubly Coprime Factorizations for LTI Systems} \label{2}

A {\em right coprime factorization} (RCF) of $G$  over $\Omega$ is a fractional representation of the form $G = NM^{-1}$, with $N \in \mathbb{A}^{m \times p} $ and $M \in \mathbb{A}^{p \times p} $, and for which there exist $X \in \mathbb{A}^{p \times m} $ and $Y \in \mathbb{A}^{p \times p} $ satisfying ${Y}M + {X}N = I$  (\cite[Ch.~4, Corollary~17]{V}). Analogously, a {\em left  coprime factorization} (LCF) of $G$ (over $\Omega$) is defined by $G = \tilde{M}^{-1}\tilde{N}$,  with $\tilde{N} \in \mathbb{A}^{m \times p} $ and $\tilde{M} \in \mathbb{A}^{m \times m} $, satisfying $\tilde{M}\tilde{Y} + \tilde{N}\tilde{X} =I$ for  $\tilde{X} \in \mathbb{A}^{p \times m} $ and $\tilde{Y} \in \mathbb{A}^{m \times m} $.  Due to the natural interpretation of the coprime factorizations  as fractional representations, the invertible $\tilde M$ and $M$ factors are sometimes called the ``denominator'' TFMs of the coprime factorization.

\begin{defn} 
A collection of eight  stable TFMs $\big(M, N$, $\tilde M, \tilde N$, $X, Y$, $\tilde X, \tilde Y\big)$
is called a  {\em doubly coprime factorization} (DCF) of $G$ over $\Omega$  if   $\tilde M$ and $M$ are invertible, yield the 
following factorizations:
$$G=\tilde{M}^{-1}\tilde{N}=NM^{-1}$$  and satisfy the following equality  (B\'{e}zout's identity):
\bb{dcrel}
\ba{cc}  Y &  X \\ -\tilde N &  \tilde M\ea
\ba{cc} M & - \tilde X \\ N &  \tilde Y \ea = I_{m+p}.
\ee
\end{defn}
To avoid excessive terminology, we refer to doubly coprime factorizations over $\Omega$ simply as doubly coprime factorizations {\bf (DCFs)} \cite[Ch.4, Remark pp. 79]{V}.

\begin{theorem}
\label{Youlaaa}
{\bf (Youla)} \cite[Ch.5, Theorem 1]{V} Let  $\big(M, N$, $\tilde M, \tilde N$, $X, Y$, $\tilde X, \tilde Y\big)$ be a DCF of $G$. Any stabilizing controller can be written as: 
\begin{equation}
\label{YoulaEq}
K_Q =\tilde{X}_Q \tilde {Y}_Q^{-1} 
   = Y_Q^{-1} X_Q \\
\end{equation}
for some $Q$ in $\mathbb{A}^{p\times m}$, where
$X_Q$, $\tilde{X}_Q$, $Y_Q$ and $\tilde{Y}_Q$ are defined as:
\begin{eqnarray}
X_Q & \overset{def}{=} & X+Q \tilde{M} \\
\tilde{X}_Q & \overset{def}{=} & \tilde{X}+MQ \\
Y_Q & \overset{def}{=} & Y - Q \tilde{N} \\
\tilde{Y}_Q & \overset{def}{=} & \tilde{Y}-NQ
\end{eqnarray} It also holds that $K_Q$ stabilizes $G$ for any $Q$ in $\mathbb{A}^{p\times m}$.
\end{theorem}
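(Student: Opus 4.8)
The plan is to split the statement into two implications: \emph{(closure)} for every $Q\in\mathbb{A}^{p\times m}$ the formula $K_Q$ defines an unambiguous, proper TFM that stabilizes $G$; and \emph{(completeness)} every stabilizing controller of $G$ equals $K_Q$ for some $Q\in\mathbb{A}^{p\times m}$. The ingredients I would use are: the eight scalar identities packed into \rf{dcrel}, notably $YM+XN=I$, $\tilde M\tilde Y+\tilde N\tilde X=I$, $X\tilde Y=Y\tilde X$ and $\tilde MN=\tilde NM$; the elementary fact that a square matrix over the commutative ring $\mathbb{A}(\la)$ that has a one-sided inverse has a two-sided one (a determinant argument), so that the two block matrices in \rf{dcrel} also multiply to $I$ in the reverse order; the normalization --- available since $G$ is strictly proper --- of the given DCF so that $N(\infty)=\tilde N(\infty)=0$ while $Y(\infty),\tilde Y(\infty)$ are invertible; and the standard coprime-factorization characterization of closed-loop stability (see \cite[Ch.~5]{V}): for $G=\tilde M^{-1}\tilde N$ an LCF and $K=UV^{-1}$ an RCF, $H(G,K)$ is stable if and only if $\tilde MV+\tilde NU$ is unimodular.

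For closure I would substitute the definitions of $X_Q,\tilde X_Q,Y_Q,\tilde Y_Q$ into $Y_QM+X_QN$, $\tilde M\tilde Y_Q+\tilde N\tilde X_Q$ and $Y_Q\tilde X_Q-X_Q\tilde Y_Q$ and check, using the identities above, that every term carrying a factor of $Q$ cancels, leaving $Y_QM+X_QN=I$, $\tilde M\tilde Y_Q+\tilde N\tilde X_Q=I$ and $Y_Q\tilde X_Q=X_Q\tilde Y_Q$; together with $\tilde MN=\tilde NM$ these say exactly that $\big(M,N,\tilde M,\tilde N,X_Q,Y_Q,\tilde X_Q,\tilde Y_Q\big)$ is again a DCF of $G$, once one knows $Y_Q,\tilde Y_Q$ are nonsingular. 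Nonsingularity (and properness of the inverses) I would get from the normalization: $Y_Q(\infty)=Y(\infty)$ and $\tilde Y_Q(\infty)=\tilde Y(\infty)$ are invertible, so $Y_Q,\tilde Y_Q$ are biproper. The relation $Y_Q\tilde X_Q=X_Q\tilde Y_Q$ then yields $Y_Q^{-1}X_Q=\tilde X_Q\tilde Y_Q^{-1}$, so $K_Q$ is well defined; and since $\tilde M\tilde Y_Q+\tilde N\tilde X_Q=I$, the stability characterization --- or, concretely, the reduction $(I+GK_Q)^{-1}=\tilde Y_Q\tilde M$ together with its analogues for the three remaining closed-loop maps, all of which collapse to products of the stable factors --- shows $H(G,K_Q)$ is stable.

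For completeness, let $K$ stabilize $G$ and fix an RCF $K=UV^{-1}$. By the characterization, $Z:=\tilde MV+\tilde NU$ is unimodular; replacing $(U,V)$ by $(UZ^{-1},VZ^{-1})$, which is still an RCF of $K$, I may assume $\tilde MV+\tilde NU=I$. Subtracting $\tilde M\tilde Y+\tilde N\tilde X=I$ gives $\tilde M(V-\tilde Y)=\tilde N(\tilde X-U)$. Applying the unimodular matrix $\big[\begin{smallmatrix}Y&X\\-\tilde N&\tilde M\end{smallmatrix}\big]$ to the stable column $\big[\begin{smallmatrix}\tilde X-U\\ V-\tilde Y\end{smallmatrix}\big]$, its lower block equals $\tilde M(V-\tilde Y)-\tilde N(\tilde X-U)=0$; multiplying on the left by the inverse $\big[\begin{smallmatrix}M&-\tilde X\\ N&\tilde Y\end{smallmatrix}\big]$ supplied by \rf{dcrel} then gives $\tilde X-U=MQ_0$ and $V-\tilde Y=NQ_0$, where $Q_0:=Y(\tilde X-U)+X(V-\tilde Y)\in\mathbb{A}^{p\times m}$. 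Putting $Q:=-Q_0$, these read $U=\tilde X+MQ=\tilde X_Q$ and $V=\tilde Y-NQ=\tilde Y_Q$, hence $K=UV^{-1}=\tilde X_Q\tilde Y_Q^{-1}=K_Q$, which completes the argument.

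The two steps that need real care --- and where I would slow down --- are the well-definedness of $K_Q$, i.e.\ the nonsingularity of $Y_Q$ and $\tilde Y_Q$ as rational matrices, which is precisely where the standing hypothesis that $G$ is strictly proper enters, and the matching of signs in the stability characterization both to the interconnection of Fig.~\ref{2Block} and to the particular sign pattern appearing in \rf{dcrel}. Everything else reduces to the bookkeeping of cancelling the $Q$-dependent terms against the identities hidden in \rf{dcrel} --- the computation behind the cited proof \cite[Ch.~5, Theorem~1]{V}.
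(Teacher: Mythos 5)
The paper does not prove this statement --- it is Youla's classical parametrization, cited from \cite[Ch.~5, Theorem~1]{V} --- so there is no in-paper argument to compare against; your reconstruction is correct and is essentially the standard proof from that reference (closure by checking that the perturbed eight-tuple is again a DCF and collapsing the closed-loop maps to products of stable factors, completeness by normalizing an RCF of $K$ so that $\tilde{M}V+\tilde{N}U=I$ and applying the inverse of the B\'{e}zout matrix to recover $U=\tilde{X}_Q$, $V=\tilde{Y}_Q$). One small simplification: no ``normalization'' of the given DCF is needed to obtain $N(\infty)=\tilde{N}(\infty)=0$, since $\tilde{N}=\tilde{M}G$ and $N=GM$ are automatically strictly proper when $G$ is, so $\tilde{Y}_Q(\infty)=\tilde{Y}(\infty)=\tilde{M}(\infty)^{-1}$ is invertible for the DCF exactly as given.
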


\begin{rem} 
\label{rem:232pm27feb2012}
The following identity shows that $\big(M, N, \tilde M, \tilde N, X_Q, Y_Q, \tilde X_Q, \tilde Y_Q \big)$ in Theorem~\ref{Youlaaa} is also a DCF of $G$:
\begin{equation}\label{dcrelQ}
\ba{cc}  (Y  - Q \tilde N) &  (X + Q \tilde M) \\ -\tilde N &  \tilde M \ea
\ba{cc} M & -(\tilde X +MQ) \\ N &  (\tilde Y -NQ) \ea = I_{m+p}, \qquad Q \in \mathbb{A}^{p \times m} 
\end{equation}
\end{rem}

\section{Feedback  Control subject to  Sparsity Constraints}
\label{Sec:3}

The precise formulation of the sparsity constrained stabilization problem  is  achieved  by imposing  a certain pre--selected sparsity pattern on the set of admissible  stabilizing controllers. 

\subsection {Specifications of Sparsity Constraints on LTI Controllers} \label{Spars}

For the boolean algebra, the operations $(+, \cdot )$ are defined as usual: $0+0=0\cdot 1=1 \cdot 0= 0 \cdot 0=0$ and $1+0=0+1=1+1=1\cdot 1=1$. By a binary matrix we mean a matrix whose entries belong to the set $\mathbb{B}$.  With the usual extension of notation, $\mathbb{B} ^{q \times l}$ stands for the set of all binary matrices with $q$ rows and $l$ columns. The addition and multiplication of binary matrices is carried out in the usual way, keeping in mind that
the binary operations  $(+, \cdot )$ follow the boolean algebra.  Binary matrices are marked with a ``$\mathrm{bin}$'' superscript, in order to distinguish them from transfer function matrices over $ \mathbb{R}(\la)$. Furthermore, for binary matrices of the same dimension, the notation $\displaystyle A^\mathrm{{bin}} \leq B^\mathrm{{bin}}$ means that $a_{ij} \leq b_{ij}$ holds entrywise for all $i$ and $j$. 

A binary matrix may be associated with a TFM of the same dimension, whereby each entry of the binary matrix corresponds to an entry of the TFM. The following definitions introduce operators that will be used to
establish a correspondence between binary matrices and the sparsity pattern of $G$ or sparsity constraints imposed on $K$. 

\begin{defn} ({\bf Pattern operator})
Given $A$ in $\mathbb{R}(\lambda)^{q \times l}$, we define $ \wp(A)  \in \mathbb{B} ^{q \times l}$  as follows:
\begin{equation}
\label{Pattern}
\wp(A)_{ij}\overset{def}{=}\left \{
\begin{array}{l} 0  \quad \mathrm{if } \qquad  A_{ij}=0  \\
                     1 \quad \mathrm{otherwise} \end{array} \right.  \qquad   i,j \in \overline{1,q}  \times \overline{1,l}
\end{equation}
\end{defn}

\begin{defn} ({\bf Sparse operator})
Conversely, for any binary matrix $\displaystyle A^\mathrm{{bin}}$  in $\mathbb{B} ^{q \times l}$, we define the following linear subspace:
\begin{equation}
\label{Sparse}
 \mathbb{S}(A^\mathrm{{bin}}) \overset{def}{=} \Big{\{} A \in  \mathbb{R}(\la)^{q\times l} \big{|} \; \wp (A) \leq A^\mathrm{{bin}} \Big{\}}
 \end{equation}
\end{defn}


\begin{defn} Given $K^{\mathrm{bin}}$ in $\mathbb{B}^{p \times m}$, the {\em sparsity constraint} $\mathcal{S}$ is defined as \cite{Rotkowitz:2006kz}:
\begin{equation}
\label{DefnS}
\mathcal{S} \overset{def}{=}\mathbb{S}(K^{\mathrm{bin}}),
\end{equation}
\end{defn}

\noindent  Hence,  $\mathcal{S}$ is the subspace of all controllers $\displaystyle K$ in $\displaystyle  \mathbb{R}(\la)^{p\times m}$ for which $K_{ij}=0$ whenever $\displaystyle K^\mathrm{{bin}}_{ij}=0$. 

We assume that $\mathbb{S}$ and $\wp$ act on $G$ and $G^{\mathrm{bin}}$ on an analogous way as above, leading to the following definitions.

\begin{defn}
The following is the sparsity pattern of $G$ ($G^{\mathrm{bin}} \in \mathbb{B}^{m \times p}$):
\begin{equation}
\label{DefnPbin}
    G^{\: \mathrm{bin}} \overset{def}{=}\:\wp(G)
\end{equation}
\end{defn}

\begin{rem}
From matrix multiplication (in the boolean algebra), we conclude that the following holds: 
\begin{equation}
\label{ineg}
\wp(K\:G) \leq \wp(K)\:\wp(G)
\end{equation}
\end{rem}


\subsection{Quadratic Invariance}
\label{QIsubsection}

 

\begin{defn}
\label{QI} \cite[Definition 2]{Rotkowitz:2006kz}
The sparsity constraint $\mathcal{S}$ is called {\em quadratically invariant} {\bf (QI)} under the plant $G$ if
\begin{equation} \label{EqQI}
 K  G K \in \mathcal{S}, \quad  K\:  \in \mathcal{S}.
\end{equation}
\end{defn}

\begin{rem} The following conditions are equivalent to (\ref{EqQI}) \cite{Rotkowitz:2006kz}:
\begin{itemize}
\item $\wp(KGK) \leq K^{\mathrm{bin}}, \text{ } K \in \mathcal{S}$
\item $K^{\mathrm{bin}}G^{\mathrm{bin}}K^{\mathrm{bin}} \leq K^{\mathrm{bin}}$
\end{itemize}
\end{rem}

\begin{defn} \label{fdbckptrans}
Define the feedback transformation $h_G: \mathbb{R}(\lambda)^{p \times m} \rightarrow \mathbb{R}(\lambda)^{p \times m}$ of $G$ with $K$, as follows:
\begin{equation}
\label{hG}
h_G(K) \overset{def}{=}K \big{(}I+ G K \big{)}^{-1}, \qquad K \in  \mathbb{R}(\lambda)^{p \times m}
\end{equation}
\end{defn}

\begin{rem} \label{invfdbckptrans}
The feedback transformation $h_G(\cdot) $ is invertible, and its inverse is given by
\begin{equation}
\label{invhG}
h^{-1}_G(K) \overset{def}{=}K \big{(}I- G K \big{)}^{-1}
\end{equation}
\end{rem}
\begin{proof} Note that $h_G(\cdot) $ is well--posed because $K$ is proper and $G$ is strictly proper. The rest of the proof follows by direct algebraic computations and is omitted for brevity.
\end{proof}

The following result is used throughout the paper. 
\begin{theorem} \cite[Theorem~14]{Rotkowitz:2006kz}
\label{Mike's}
Given a sparsity constraint $\mathcal{S}$, the following equivalence holds:
\begin{equation} \label{Rotk}
\mathcal{S}\; \text{is QI under} \; G \Longleftrightarrow h_{G}(\mathcal{S} ) = \mathcal{S}
\end{equation} where we adopt the following abuse of notation: $$ h_{G}(\mathcal{S} )\overset{def}{=} \{h_G(K) | K \in \mathcal{S} \}$$
\end{theorem}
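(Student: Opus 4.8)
The plan is to establish the equivalence in \eqref{Rotk} by exploiting the fact that $h_G$ is a bijection on $\mathbb{R}(\lambda)^{p\times m}$ (Remark~\ref{invfdbckptrans}) together with the algebraic identity relating $h_G(K)$, $K$, and $KGK$. The key computation is the expansion
$$
h_G(K) = K(I+GK)^{-1} = K - KGK(I+GK)^{-1} = K - KG\,h_G(K),
$$
equivalently $h_G(K) = K - KGK + KGKGK - \cdots$, which exhibits $h_G(K) - K$ as a (formal or genuine) expression built from the product $KGK$. First I would prove the forward direction: assuming $\mathcal{S}$ is QI, I want to show $h_G(\mathcal{S})=\mathcal{S}$. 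Take $K\in\mathcal{S}$; from $h_G(K) = K - (KGK)(I+GK)^{-1}$ and the identity $(I+GK)^{-1} = I - GK(I+GK)^{-1} = I - G\,h_G(K)$, one gets $h_G(K) = K - KGK + (KGK)G\,h_G(K)$. The term $KGK$ lies in $\mathcal{S}$ by quadratic invariance; the remaining correction $(KGK)G\,h_G(K)$ must be handled by an iteration/fixed-point argument showing it too stays in $\mathcal{S}$. This gives $h_G(\mathcal{S})\subseteq\mathcal{S}$; applying the same reasoning to $h_G^{-1}(K)=K(I-GK)^{-1}$, which has an analogous expansion $h_G^{-1}(K) = K + KGK + \cdots$, yields $h_G^{-1}(\mathcal{S})\subseteq\mathcal{S}$, hence $\mathcal{S}\subseteq h_G(\mathcal{S})$, and the two inclusions give equality.

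For the converse, assume $h_G(\mathcal{S})=\mathcal{S}$ and let $K\in\mathcal{S}$. I would write $KGK$ directly in terms of $K$ and $h_G(K)$: from $h_G(K)(I+GK)=K$ we obtain $K - h_G(K) = h_G(K)GK$, and similarly, using the inverse transformation, $K - h_G^{-1}(K) = -h_G^{-1}(K)GK$; combining such relations lets me solve for $KGK$ as a linear combination of $K$, $h_G(K)$, and $h_G(h_G(K))$ — all of which lie in $\mathcal{S}$ by hypothesis and the assumption that $\mathcal{S}$ is a linear subspace closed under $h_G$. Since $\mathcal{S}$ is a subspace, $KGK\in\mathcal{S}$, which is precisely \eqref{EqQI}.

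The main obstacle I anticipate is making the infinite-series / fixed-point manipulations rigorous over the field of rational functions rather than treating them purely formally: one must justify that $(I+GK)^{-1}$ exists (which is guaranteed by $G$ strictly proper and $K$ proper, as in the proof of Remark~\ref{invfdbckptrans}), and that the "correction term" argument for $h_G(\mathcal{S})\subseteq\mathcal{S}$ genuinely closes rather than merely reducing to a tail term. The clean way around this is to avoid series altogether and instead use only the two \emph{exact} algebraic identities $h_G(K) + h_G(K)GK = K$ and $K\,G\,h_G(K) = K - h_G(K)$, from which $h_G(KGK)$ or $KGK$ itself can be expressed through finitely many applications of $h_G$ and $h_G^{-1}$ to elements of $\mathcal{S}$; then the whole argument reduces to linearity of $\mathcal{S}$ and the standing bijectivity of $h_G$, and the equivalence follows without any convergence concerns.
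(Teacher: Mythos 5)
First, a framing remark: the paper offers no proof of this statement; it is imported verbatim from Rotkowitz--Lall (Theorem~14 of that reference), with an ``algebraic proof'' attributed to Lessard--Lall. Both of those proofs rest on exactly the device your proposal tries to avoid, and that is where your argument has a genuine gap.

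The gap is in the direction ``QI $\Rightarrow h_G(\mathcal{S})\subseteq\mathcal{S}$.'' Your identity $h_G(K)=K-KGK+(KGK)\,G\,h_G(K)$, and every finite iterate of it, leaves a tail term of the form $K(GK)^nG\,h_G(K)$ whose membership in $\mathcal{S}$ cannot be certified without already knowing $\wp(h_G(K))\leq K^{\mathrm{bin}}$ --- the very thing being proved. Your proposed repair (``use only the two exact identities $h_G(K)+h_G(K)GK=K$ and $K\,G\,h_G(K)=K-h_G(K)$, then invoke linearity and bijectivity'') does not close this: those identities only determine $h_G(K)$ implicitly, and QI lets you place a product $K_1GK_2$ in $\mathcal{S}$ only when both factors are already known to lie in $\mathcal{S}$, so the reasoning stays circular; no finite combination of these relations yields $h_G(K)\in\mathcal{S}$. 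The rigorous route is the one you reject: introduce a scalar indeterminate $\alpha$, note that each entry of $h_G(\alpha K)=\alpha K(I+\alpha GK)^{-1}$ is a rational function of $\alpha$ over $\mathbb{R}(\lambda)$ whose formal Maclaurin expansion is $\sum_{n\geq 0}(-1)^n\alpha^{n+1}K(GK)^n$, show $K(GK)^n\in\mathcal{S}$ for all $n$ by induction (using the entrywise form $K^{\mathrm{bin}}G^{\mathrm{bin}}K^{\mathrm{bin}}\leq K^{\mathrm{bin}}$ of QI, which gives $K_1GK_2\in\mathcal{S}$ for all $K_1,K_2\in\mathcal{S}$), and conclude that the forbidden entries vanish identically in $\alpha$, hence at $\alpha=1$. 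This is a formal power-series identity in $\mathbb{R}(\lambda)[[\alpha]]$ --- no analytic convergence is needed --- but it is irreducibly a series argument, not a finite algebraic one.

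The converse direction can indeed be done with finitely many exact identities, but not with the combination you name: $KGK$ is \emph{not} a linear combination of $K$, $h_G(K)$ and $h_G(h_G(K))$ (a degree count in $GK$ in the scalar case already rules this out). What does work: since $(I\pm GK)^{-1}$ commute with $GK$, one computes
\begin{equation*}
h_G^{-1}(K)-h_G(K)=2\,KGK\bigl(I-(GK)^2\bigr)^{-1}=2\,(KGK)\bigl(I-G(KGK)\bigr)^{-1}=2\,h_G^{-1}(KGK),
\end{equation*}
so if $h_G(\mathcal{S})=\mathcal{S}$ (hence also $h_G^{-1}(\mathcal{S})=\mathcal{S}$, by injectivity of $h_G$ on all of $\mathbb{R}(\lambda)^{p\times m}$), then $h_G^{-1}(KGK)=\tfrac12\bigl(h_G^{-1}(K)-h_G(K)\bigr)\in\mathcal{S}$ and therefore $KGK=h_G\bigl(h_G^{-1}(KGK)\bigr)\in h_G(\mathcal{S})=\mathcal{S}$. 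You should either supply this identity explicitly or fall back on coefficient extraction (the $\alpha^2$-coefficient of $h_G(\alpha K)$ is $-KGK$). As written, neither direction of your proposal is complete.
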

An alternative algebraic proof of Theorem (\ref{Rotk})  is given in \cite[Theorem~9]{Lessard:2010be}.

\begin{rem} \label{QIminG} The set $\mathcal{S}$ is QI under the given plant $G$ if and only if $\mathcal{S}$ is QI under $-G$. This implies, via (\ref{Rotk}) above, that $\mathcal{S}$ is QI under  $G$ if and only if $\displaystyle h^{-1}_{G}(\mathcal{S} ) = \mathcal{S}$, where $h^{-1}_{G}(\mathcal{S} ) \overset{def}{=} \{h^{-1}_G(K) | K \in \mathcal{S} \}$.
\end{rem}



\section{Main Result}
\label{Sec:4}

Given a QI sparsity constraint $\mathcal{S}$, in Theorem~\ref{truemain}  we develop necessary and sufficient  conditions for the existence of a {\em stabilizing}  controller in $\mathcal{S}$. These conditions are formulated in terms of the existence of a  doubly coprime factorization of the plant in which the factors satisfy additional constraints.  Such a  factorization (when it exists) is equivalent to solving an exact model matching problem with stability \cite{Wolo} restrictions, which has been previously investigated in the control literature.  The following preparatory result will be used throughout this Section.


\begin{prop}
\label{Sebe}
Let $\big(M, N,\tilde M, \tilde N,X, Y, \tilde X, \tilde Y\big)$ be a given DCF of $G$. The following identities hold:
\begin{equation}
\label{SebeEq}
MX_Q=K_Q\big(I+GK_Q \big)^{-1} , \quad  \tilde X_Q \tilde M=K_Q\big(I+GK_Q \big)^{-1}, \qquad Q \in \mathbb{A}^{p\times m}
\end{equation} 
\end{prop}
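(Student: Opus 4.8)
The plan is to verify the two claimed identities by direct substitution into the Youla formulas, using the B\'{e}zout identity (\ref{dcrelQ}) from Remark~\ref{rem:232pm27feb2012} as the main algebraic engine. First I would observe that, by Theorem~\ref{Youlaaa}, the controller can be written as $K_Q = Y_Q^{-1} X_Q = \tilde X_Q \tilde Y_Q^{-1}$, and that $\big(M, N, \tilde M, \tilde N, X_Q, Y_Q, \tilde X_Q, \tilde Y_Q\big)$ is itself a DCF of $G$ (Remark~\ref{rem:232pm27feq2012}); so it suffices to prove the statement with $Q = 0$, i.e.\ to show $MX = K(I+GK)^{-1}$ and $\tilde X \tilde M = K(I+GK)^{-1}$ where $K = Y^{-1}X = \tilde X \tilde Y^{-1}$, and then apply that result to the $Q$-shifted DCF. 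This reduces the work to a single clean identity.

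Next I would compute $(I + GK)^{-1}$ in terms of the coprime factors. Writing $G = NM^{-1}$ and $K = Y^{-1}X$, one has $I + GK = I + NM^{-1}Y^{-1}X$. Factoring, $I + GK = M^{-1}Y^{-1}(YM + \ldots)$ — more precisely I would aim to show $(I+GK)^{-1} = M(YM + XN)^{-1}Y = M Y$ after invoking the Bézout relation $YM + XN = I$ (which is the $(1,1)$ block of (\ref{dcrel})). Then $K(I+GK)^{-1} = Y^{-1}X \cdot MY$; to collapse this to $MX$ I would use the commutation relations hidden in the off-diagonal blocks of the Bézout identity (\ref{dcrel}), namely relations such as $XM - M\tilde X \cdot(\text{something}) $; the cleanest route is to use that the two matrix products in (\ref{dcrel}) commute (a standard fact for DCFs), which yields identities like $MY = $ (expression) and $X M = M X$-type cross relations among $\{X,Y,\tilde X, \tilde Y, M, N, \tilde M, \tilde N\}$. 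The second identity $\tilde X\tilde M = K(I+GK)^{-1}$ follows symmetrically using $K = \tilde X\tilde Y^{-1}$, $G = \tilde M^{-1}\tilde N$, and $\tilde M\tilde Y + \tilde N\tilde X = I$.

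I expect the main obstacle to be bookkeeping the cross-relations among the eight factors correctly: the Bézout identity (\ref{dcrel}) gives four block equations, and one genuinely needs the fact that the two block matrices in (\ref{dcrel}) are inverses of each other in \emph{both} orders (so the product in the reverse order is also $I_{m+p}$), which supplies the additional identities $MY - \tilde X \tilde N = I$, $MX + \tilde X\tilde M$-type relations, etc. Pinning down exactly which of these is needed to pass from $Y^{-1}X\cdot MY$ to $MX$ — and doing so without circular reasoning — is the delicate step. Once the right relation is identified the computation is a few lines. I would also note the well-posedness point already made in the proof of Remark~\ref{invfdbckptrans}: since $G$ is strictly proper and $K$ proper, $(I+GK)^{-1}$ exists as a proper TFM, so all the manipulations are legitimate over $\mathbb{R}(\lambda)$.
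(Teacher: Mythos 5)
Your overall strategy (reduce to the $Q$-shifted DCF via Remark~\ref{rem:232pm27feb2012} and then grind through B\'{e}zout) is the right family of ideas, and the reduction to $Q=0$ is legitimate, but the middle of your argument has a genuine gap that would stop the computation. The identity you aim to establish, $(I+GK)^{-1}=M(YM+XN)^{-1}Y=MY$, is false as stated and in fact dimensionally inconsistent for non-square plants: $GK$ is $m\times m$ while $MY$ is $p\times p$. What is true is the statement with the factors in the other order, $(I+KG)^{-1}=Y^{-1}(YM+XN)M^{-1}$ inverted, i.e.\ $(I+KG)^{-1}=MY$, since $I+KG=Y^{-1}XNM^{-1}+I=Y^{-1}(YM+XN)M^{-1}=Y^{-1}M^{-1}$. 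Because you keep the inverse on the $GK$ side, you land on $Y^{-1}X\cdot MY$ (again dimensionally problematic) and you explicitly concede you do not know which cross-relation collapses it to $MX$. There is no such clean cross-relation; the missing ingredient is the push-through identity $K(I+GK)^{-1}=(I+KG)^{-1}K$, which must be applied \emph{first}. That is exactly how the paper proceeds: $K_Q(I+GK_Q)^{-1}=(I+K_QG)^{-1}K_Q=(I+Y_Q^{-1}X_QNM^{-1})^{-1}Y_Q^{-1}X_Q$, then $X_QN=I-Y_QM$ from the single B\'{e}zout block $Y_QM+X_QN=I$ of (\ref{dcrelQ}) gives $(I+K_QG)^{-1}=MY_Q$, and the product telescopes to $MX_Q$. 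No commutation of the two block matrices in (\ref{dcrel}) is needed for this half of the claim.

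Two smaller remarks. First, your appeal to ``the two matrix products in (\ref{dcrel}) commute'' is not a throwaway: the reverse-order product does give $MX_Q-\tilde X_Q\tilde M=0$, which is the cheapest way to see that the two left-hand sides in (\ref{SebeEq}) agree with each other, so that route is worth keeping for the second identity (the paper instead redoes the symmetric computation with $K_Q=\tilde X_Q\tilde Y_Q^{-1}$ and $G=\tilde M^{-1}\tilde N$). Second, your well-posedness observation ($G$ strictly proper, $K$ proper, hence $(I+GK)^{-1}$ and $(I+KG)^{-1}$ exist) is correct and is the right justification for the manipulations once the order issue is fixed.
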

\begin{proof}   We  proceed to verifying that $MX_Q=K_Q\big(I+GK_Q \big)^{-1}$ is true, while the proof that $X_Q \tilde M=K_Q\big(I+GK_Q \big)^{-1}$ holds is omitted because it is analogous.
From $K_Q=Y_Q^{-1}X_Q$ and $G=NM^{-1}$, we get that $K_Q\big(I+GK_Q \big)^{-1} = \big(I+Y_Q^{-1}X_QN M^{-1} \big)^{-1}Y_Q^{-1}X_Q$, where we used the fact
that $K_Q\big(I+GK_Q \big)^{-1} = \big(I+K_QG \big)^{-1}K_Q$. Finally, using B\'{e}zout's identity we find that $\big(I+Y_Q^{-1}X_QN M^{-1} \big)^{-1} =\big(I+Y_Q^{-1}(I-Y_QM) M^{-1} \big)^{-1}=MY_Q$, which by direct substitution in $\big(I+Y_Q^{-1}X_QN M^{-1} \big)^{-1}Y_Q^{-1}X_Q $ concludes the proof. 
\end{proof}

\newpage
The following Theorem is a main result of this paper. \\
\hlinee  
\begin{theorem}
\label{truemain} Let  $\big(M, N$, $\tilde M, \tilde N$, $X, Y$, $\tilde X, \tilde Y\big)$ be a DCF of $G$ and $\mathcal{S}$ be a 
 QI sparsity constraint.  
 \begin{itemize}
 \item {\bf Sufficiency:}
If $Q$ in $\mathbb{A}^{p\times m}$ is such that at least one of the following inequalities holds\footnote{In fact, it also follows from the statement of the Theorem that either both inequalities hold or none.}:
\begin{subequations}
\label{adouaEq2}
\begin{align}
\wp (\tilde X_Q \tilde M)  \leq  K^{\mathrm{bin}} \\ 
 \wp(M X_Q)  \leq  K^{\mathrm{bin}}
\end{align}
\end{subequations} then $K_Q$ is a stabilizing controller in $\mathcal{S}$. 

\item {\bf Necessity:} If there is a stabilizing controller in $\mathcal{S}$ then there exists some $Q$ in $\mathbb{A}^{p\times m}$
for which both inequalities in  (\ref{adouaEq2}) hold and, in addition, the controller can be written as $K_Q$.
\end{itemize}
\end{theorem}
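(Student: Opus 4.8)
The plan is to exploit the Youla parametrization (Theorem~\ref{Youlaaa}) together with Proposition~\ref{Sebe} and the quadratic invariance characterization $h_G(\mathcal{S}) = \mathcal{S}$ from Theorem~\ref{Mike's}. The key observation bridging the statement is that, by Proposition~\ref{Sebe}, both $MX_Q$ and $\tilde{X}_Q\tilde{M}$ equal $K_Q(I+GK_Q)^{-1} = h_G(K_Q)$. Hence the two inequalities in (\ref{adouaEq2}) are literally the single statement $h_G(K_Q) \in \mathcal{S}$, which already explains the footnote that the two inequalities stand or fall together. So the whole theorem reduces to showing: $K_Q \in \mathcal{S}$ if and only if $h_G(K_Q) \in \mathcal{S}$, together with the fact that every stabilizing controller is some $K_Q$.

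For \textbf{sufficiency}, I would argue as follows. Suppose (\ref{adouaEq2}) holds for some $Q \in \mathbb{A}^{p\times m}$. By Proposition~\ref{Sebe} this means $h_G(K_Q) \in \mathcal{S}$. Since $\mathcal{S}$ is QI under $G$, Theorem~\ref{Mike's} gives $h_G(\mathcal{S}) = \mathcal{S}$; because $h_G$ is a bijection on $\mathbb{R}(\lambda)^{p\times m}$ (Remark~\ref{invfdbckptrans}) with $h_G^{-1} = h_{-G}$, the identity $h_G(\mathcal{S})=\mathcal{S}$ is equivalent to $h_G^{-1}(\mathcal{S}) = \mathcal{S}$ (Remark~\ref{QIminG}). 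Applying $h_G^{-1}$ to $h_G(K_Q) \in \mathcal{S}$ then yields $K_Q \in \mathcal{S}$. Finally, $K_Q$ stabilizes $G$ for every $Q \in \mathbb{A}^{p\times m}$ by the last sentence of Theorem~\ref{Youlaaa}. Hence $K_Q$ is a stabilizing controller lying in $\mathcal{S}$.

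For \textbf{necessity}, suppose $K$ is a stabilizing controller with $K \in \mathcal{S}$. By Youla's theorem (Theorem~\ref{Youlaaa}), $K = K_Q$ for some $Q \in \mathbb{A}^{p\times m}$; this is exactly the ``the controller can be written as $K_Q$'' claim. Now, since $K_Q \in \mathcal{S}$ and $\mathcal{S}$ is QI, $h_G(K_Q) \in h_G(\mathcal{S}) = \mathcal{S}$, i.e.\ $\wp(h_G(K_Q)) \leq K^{\mathrm{bin}}$. Invoking Proposition~\ref{Sebe} once more to rewrite $h_G(K_Q)$ as $MX_Q$ and as $\tilde{X}_Q\tilde{M}$ gives both inequalities in (\ref{adouaEq2}).

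The only genuine subtlety — and the step I'd be most careful about — is the well-posedness needed to apply Proposition~\ref{Sebe} and $h_G$: one must check that $I + GK_Q$ (equivalently $I + K_Q G$) is invertible over $\mathbb{R}(\lambda)$ so that $h_G(K_Q)$ is defined. This holds because $K_Q$ is a stabilizing controller of $G$, so $H(G,K_Q)$ is well-defined (indeed stable), which forces $(I+GK_Q)^{-1}$ to exist as a proper rational matrix; alternatively it follows from the Bézout identity since $Y_Q M + X_Q N = I$ makes $Y_Q M$, hence the relevant feedback operator, invertible. Once that is in hand, the argument is a short chain of substitutions, and the two directions are mirror images of each other under $h_G \leftrightarrow h_G^{-1}$.
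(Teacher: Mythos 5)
Your proposal is correct and follows essentially the same route as the paper: both directions rest on Youla's parametrization, the identity $\tilde X_Q\tilde M = MX_Q = h_G(K_Q)$ from Proposition~\ref{Sebe}, and the quadratic-invariance characterization $h_G(\mathcal{S})=\mathcal{S}$ (equivalently $h_G^{-1}(\mathcal{S})=\mathcal{S}$). Your added remark on well-posedness is the same point the paper disposes of in Remark~\ref{invfdbckptrans} via the strict properness of $G$.
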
  \hlinee
\begin{proof} {\bf Necessity:}  Suppose that there exists a stabilizing controller in  $\mathcal{S}$, then, as a consequence of Youla's Theorem~\ref{Youlaaa}, such a controller can be written as $K_Q$ for some $Q$ in $\mathbb{A}^{p\times m}$. We now use the fact that $K_Q$ is in $\mathcal{S}$ to prove that both inequalities in  (\ref{adouaEq2}) hold.  According to Proposition~\ref{Sebe} we get from (\ref{SebeEq}) that 
\begin{equation} \label{ec8}
\tilde X_Q \tilde M=K_Q\big(I+GK_Q \big)^{-1}
\end{equation}
\noindent We apply the $\wp$ operator (\ref{Pattern}) on both sides of equation (\ref{ec8})  and using  Definition~\ref{fdbckptrans} we find that 
$\wp( \tilde X_Q \tilde M)=\wp(h_G(K_Q))$. Since $\mathcal{S}$ is QI and $K_Q$ is in $\mathcal{S}$, it follows from (\ref{Rotk}) that $h_G(K_Q)$ belongs to $\mathcal{S}$ and $\wp(h_G(K_Q))\leq K^{\mathrm{bin}}$, which leads to $\wp (\tilde X_Q \tilde M) \leq K^{\mathrm{bin}}$. Similarly, we employ (\ref{SebeEq}) to get that $\wp(MX_Q)=\wp(h_G(K_Q))$ in order to finally obtain that  $\wp(M X_Q)\leq K^{\mathrm{bin}}$.

{\bf Sufficiency:} Take each side of (\ref{ec8}) as an argument for $h_G^{-1}(\cdot)$ in order to get via Definition~\ref{fdbckptrans}  that  $h_G^{-1}(\tilde X_Q \tilde M)=h_G^{-1} (h_G(K_Q))$ and equivalently that $K_Q=h_G^{-1}(\tilde X_Q \tilde M)$. In addition, it follows from Remark~\ref{invfdbckptrans} and Remark~\ref{QIminG} that $h_G^{-1}(\mathcal{S})=\mathcal{S}$, which, from the assumption that $\wp( \tilde X_Q \tilde M)\leq K^{\mathrm{bin}}$,  implies that $K_Q=h_G^{-1}( \tilde X_Q \tilde M)$ is in $\mathcal{S}$ .  The fact that $K_Q$ is stabilizing follows from Youla's Theorem~\ref{Youlaaa}.
The sufficiency with respect to $\wp(MX_Q)\leq K^{\mathrm{bin}}$ follows by a similar line of proof and so is omitted for brevity.
\end{proof}

\newpage
\subsection{Controller Synthesis as An Exact Model--Matching Problem  with Stability Restrictions} \label{modmatching}

Henceforth, given a matrix $V$ with $n$ rows and $q$ columns, we adopt the following notation:

\begin{tabular}{l l}
& \\
$\mathrm{vec}(V)$ & $\vec{v} = \mathrm{vec}(V)$ gives $\vec{v}_{(i+(j-1)n)}=V_{i,j} $\\
 $\mathrm{diag}(V)$ & $\Delta = \mathrm{diag}(V)$ is diagonal and $\Delta_{ii} =\vec{v}_i$  \\
 & 
 \end{tabular}
 
In this section, we will outline a method (based on Theorem~\ref{truemain} above) for the computation of a stabilizing controller subject to a pre-selected QI sparsity constraint $\mathcal{S}$ (whenever such a controller exists). Given a DCF of $G$, which can be computed using the standard state--space techniques in \cite{NETT:1984un,Khargonekar:1982vm}, our goal is to obtain $Q$ in $\mathbb{A}^{p\times m}$ such that (\ref{adouaEq2}) is satisfied.

Our approach is based on the realization that (\ref{adouaEq2}) can be cast 
as the feasibility of an exact model-matching problem \cite{Wolo} with respect to $Q$ in $\mathbb{A}^{p\times m}$.  
This correspondence is stated precisely in the following Theorem, while Section \ref{sec:349pm1March2012}
provides more details and references on the computation and tests for the existence of solutions of the exact model matching
problem.


\hlinee
\begin{theorem} \label{mm} Consider that a DCF of $G$ $(M,N,\tilde{M},\tilde{N},X,Y,\tilde{X},\tilde{Y})$ is given and that a QI sparsity constraint $\mathcal{S}$ is pre-selected via a choice of $K^{bin}$ in $\mathbb{B}^{p \times m}$. The existence of a stabilizing controller in $\mathcal{S}$ is equivalent to the existence of $Q$ in $\mathbb{A}^{p \times m}$ for which
at least one of the following equivalent equalities holds:
\begin{subequations} \label{mmEq}
\begin{align}
\Phi  \big(M^T \otimes \tilde M\big) \mathrm{vec}(Q)  +   \Phi \: \mathrm{vec} \big( \tilde X \tilde M \big)  =  0 \\
\Phi  \big(M^T \otimes \tilde M\big) \mathrm{vec}(Q)  +  \Phi \: \mathrm{vec} \big( M X \big)  =  0
\end{align}
\end{subequations}
\noindent where $\Phi$ is defined as:
\begin{equation} \label{defnPhi}
\Phi \overset{def}{=}I - \mathrm{diag}(K^{\mathrm{bin}})
\end{equation} In addition, if there is  $Q$ in $\mathbb{A}^{p \times m}$ that satisfies (\ref{mmEq}) then $K_Q$ is a stabilizing controller in $\mathcal{S}$.
\end{theorem}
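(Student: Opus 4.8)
The plan is to reduce Theorem~\ref{mm} directly to Theorem~\ref{truemain} by showing that the vectorized linear equations in~(\ref{mmEq}) are nothing more than a rewriting of the sparsity inequalities in~(\ref{adouaEq2}). The key observation is that, since the factor TFMs $M$, $\tilde{M}$, $X$, $\tilde{X}$ and the parameter $Q$ are all stable, the products $\tilde{X}_Q\tilde{M} = (\tilde{X}+MQ)\tilde{M}$ and $MX_Q = M(X+Q\tilde{M})$ are stable TFMs in $\mathbb{A}^{p\times m}$; hence applying $\wp$ and comparing with $K^{\mathrm{bin}}$ makes sense, and the inequality $\wp(\tilde{X}_Q\tilde{M})\leq K^{\mathrm{bin}}$ holds if and only if the $(i,j)$ entry of $\tilde{X}_Q\tilde{M}$ is the zero transfer function for every index pair with $K^{\mathrm{bin}}_{ij}=0$. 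That is a \emph{linear} constraint on $Q$ over the field $\mathbb{R}(\la)$, and the matrix $\Phi = I - \mathrm{diag}(K^{\mathrm{bin}})$ is precisely the coordinate projection that selects those "forbidden" entries: a diagonal $0/1$ matrix with a $1$ in position $(k,k)$ exactly when the $k$-th entry of $\mathrm{vec}(K^{\mathrm{bin}})$ is zero.

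The main steps, in order, are as follows. First I would expand $\tilde{X}_Q\tilde{M} = \tilde{X}\tilde{M} + MQ\tilde{M}$ and apply the identity $\mathrm{vec}(AQB) = (B^T\otimes A)\mathrm{vec}(Q)$ with $A = M$, $B = \tilde{M}$, obtaining $\mathrm{vec}(\tilde{X}_Q\tilde{M}) = (\tilde{M}^T\otimes M)\mathrm{vec}(Q) + \mathrm{vec}(\tilde{X}\tilde{M})$. (I should double-check the precise Kronecker ordering against the paper's stated convention for $\mathrm{vec}$; the statement writes $M^T\otimes\tilde M$, which is consistent with the column-stacking convention announced right before the theorem, so I would simply adopt whichever ordering is dictated by $\mathrm{vec}(AQB)=(B^T\otimes A)\mathrm{vec}(Q)$ under that convention.) Second, I would note that for a TFM $V$, the condition $\wp(V)\leq K^{\mathrm{bin}}$ is equivalent to $\Phi\,\mathrm{vec}(V) = 0$, because $\Phi$ zeroes out the entries where $K^{\mathrm{bin}}=1$ and leaves untouched (and hence forces to vanish) precisely the entries where $K^{\mathrm{bin}}=0$; the entrywise boolean inequality $\wp(V)_{ij}\le K^{\mathrm{bin}}_{ij}$ says exactly that $V_{ij}=0$ whenever $K^{\mathrm{bin}}_{ij}=0$. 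Combining these two steps turns $\wp(\tilde{X}_Q\tilde{M})\leq K^{\mathrm{bin}}$ into the first equation of~(\ref{mmEq}), and the identical argument with $MX_Q = MX + MQ\tilde{M}$ gives the second. Third, by Theorem~\ref{truemain} either both inequalities in~(\ref{adouaEq2}) hold for a given $Q$ or neither does, so the two equations in~(\ref{mmEq}) are equivalent; and the existence of such a $Q$ is equivalent to the existence of a stabilizing controller in $\mathcal{S}$, again by Theorem~\ref{truemain}. The final sentence --- that any feasible $Q$ yields $K_Q$ stabilizing in $\mathcal{S}$ --- is then immediate from the Sufficiency part of Theorem~\ref{truemain}.

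I do not expect a serious obstacle here; the result is essentially a bookkeeping reformulation. The one place to be careful is the exact correspondence between the boolean comparison $\wp(\cdot)\le K^{\mathrm{bin}}$ and the algebraic equation $\Phi\,\mathrm{vec}(\cdot)=0$: one must verify that $\mathrm{diag}(K^{\mathrm{bin}})$ is formed by stacking $K^{\mathrm{bin}}$ into a vector (via $\mathrm{vec}$) and placing it on the diagonal, so that $I - \mathrm{diag}(K^{\mathrm{bin}})$ has a $1$ on the diagonal exactly at the "forbidden" positions, and that the ordering used in $\mathrm{vec}(Q)$ matches the ordering used in $\mathrm{diag}(K^{\mathrm{bin}})$ --- otherwise the $\Phi$ on the left of the Kronecker term and the $\Phi$ on the constant term would refer to inconsistent indexings. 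With the paper's conventions this is automatic, so the proof reduces to: (i) the Kronecker/vec identity, (ii) the projection interpretation of $\Phi$, and (iii) citing Theorem~\ref{truemain} for the equivalence with stabilizability and for the "both-or-neither" remark that makes (\ref{mmEq}a) and (\ref{mmEq}b) interchangeable.
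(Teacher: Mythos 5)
Your proposal is correct and follows essentially the same route as the paper: expand $\tilde X_Q\tilde M=\tilde X\tilde M+MQ\tilde M$ and $MX_Q=MX+MQ\tilde M$, vectorize via the Kronecker identity, observe that $\wp(\cdot)\leq K^{\mathrm{bin}}$ is exactly the vanishing of the entries selected by $\Phi=I-\mathrm{diag}(K^{\mathrm{bin}})$, and invoke Theorem~\ref{truemain}. Your side remark about the Kronecker ordering is well taken --- under the paper's column-stacking convention the identity $\mathrm{vec}(MQ\tilde M)=(\tilde M^{T}\otimes M)\mathrm{vec}(Q)$ would place the factors in the opposite order from the $(M^{T}\otimes\tilde M)$ appearing in the statement --- but this is a notational issue in the paper itself and does not affect the substance of your argument.
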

\hlinee
\begin{proof} The proof follows by establishing an equivalence between (\ref{mmEq}) and (\ref{adouaEq2}).
We start by rewriting (\ref{adouaEq2}) as follows:
\begin{equation}
\label{eq:1055pm18April2012}
\wp \big( (\tilde X + MQ) \tilde M) =\wp \big(  M Q  \tilde M + \tilde X \tilde M ) \leq K^{\mathrm{bin}}
\end{equation}
\begin{equation} 
\label{eq:1056pm18April2012}
 \wp\big(M (X+Q\tilde M) \big) =  \wp \big( MQ \tilde M + MX) \leq K^{\mathrm{bin}}
\end{equation}
The vectorization of (\ref{eq:1055pm18April2012})-(\ref{eq:1056pm18April2012}) leads to:
\begin{equation}
\label{eq:1057pm18April2012}
\wp \bigg(  \big(M^T \otimes \tilde M \big) \mathrm{vec}(Q) +  \mathrm{vec} \big( \tilde X \tilde M \big) \bigg) \leq \mathrm{vec} \big(K^{\mathrm{bin}} \big)
\end{equation}
\begin{equation} 
\label{eq:1058pm18April2012}
 \wp\bigg( \big(M^T \otimes \tilde M\big) \mathrm{vec}(Q) +  \mathrm{vec} \big( M X \big)      \bigg) \leq  \mathrm{vec} \big( K^{\mathrm{bin}} \big)
\end{equation}

Now notice that if the $i$-th entry of $\mathrm{vec} \big(K^{\mathrm{bin}} \big)$ is zero then the corresponding entries
of the left hand side of (\ref{eq:1057pm18April2012}) and (\ref{eq:1058pm18April2012}) must both be zero, which is equivalent to (\ref{mmEq}) \end{proof}

\subsubsection{Computational considerations}
\label{sec:349pm1March2012}
Problems of the type (\ref{mmEq}) are of particular importance in linear control theory and were  formulated and proposed for the first time by Wolovich (\cite{Wolo}), who also coined the term {\em exact model--matching} in the early 1970's. Under the additional constraint that $Q$ lies in $\mathbb{A}^{p \times m}$, the problem is referred to as exact model--matching with stability restrictions (see \cite{GAO:1989uk}).  Reliable and efficient state--space algorithms for solving (\ref{mmEq}) are available in \cite{Chu:2006vq}, which also describes a method to  ascertain when a solution exists and consequently, from Theorem~\ref{mm}, decide when a stabilizing controller in $\mathcal{S}$ exists.  Given a stabilizing controller in $\mathcal{S}$ one can use the results in \cite{Sabau:2009fi,T} to obtain a convex parametrization of {\em all} stabilizing controllers in $\mathcal{S}$.  Also, since the resulting convex parametrization is affine in $Q$, one can use the tractable methods proposed in \cite{Rotkowitz:2006kz} to design norm-optimal controllers for both the disturbance attenuation and the mixed--sensitivity $\mathcal{H}_2$ problems.


\subsection{A Numerical Example} \label{ANE}

Consider the following choices for the plant $G$ and the QI sparsity constraint $\mathcal{S}$ to be imposed on the controller as specified via $K^{\mathrm{bin}}$: 
\[
G = \ba{cc}  \frac{1}{\la+4} & \frac{1}{\la-2} \\
                             \frac{1}{\la -1}  & 0  \\
                             \frac{1}{\la + 5} & \frac{1}{\la - 3}
\ea
\]

\[
K^{\mathrm{bin}} = \ba{ccc} 0 & 1 & 0 \\ 1 & 1 & 1 \\ \ea
\]

We use the state--space formulas  from \cite{NETT:1984un, Khargonekar:1982vm} to obtain the following DCF of $G$:

\[ \tilde M  = \ba{ccc} \frac{\la-2}{\la+6} & 0 & 0 \\
                                        0  & \frac{\la-1}{\la+7} & \frac{\la-3}{\la+7} \\
                                        0  &  0  & \frac{\la-3}{\la+8} \ea, \quad
    \tilde N  = \ba{cc} \frac{\la-2}{(\la+4)(\la+6)} & \frac{1}{\la+6} \\
                                           \frac{2(\la+1)}{(\la+5)(\la+7)} & \frac{1}{\la+7} \\
                                           \frac{\la-3}{(\la+8)(\la+5)} & \frac{1}{\la+8} \\ \ea ,   
                                        \]
\[
X= \ba{ccc} \frac{\la-2}{\la+6} & \frac{1}{\la+7} & \frac{\la-3}{\la+8} \\
                             \frac{1}{\la+6} & \frac{\la-1}{\la+7} & \frac{1}{\la+8}  \\ \ea ,  \]
 \[                            
Y=\ba{cc}   \frac{ \la^5 + 42 \la^4 + 617 \la^3 + 4144 \la^2 + 12968 \la + 15256}{ \la^5 + 30 \la^4 + 355 \la^3 + 2070 \la^2 + 5944 \la + 6720} &
    \frac{\la^4 + 777 \la^3 + 9557 \la^2 + 27300 \la - 13060}{\la^5 + 30 \la^4 + 355 \la^3 + 2070 \la^2 + 5944 \la + 6720} \\
  \frac{ 2 \la + 14}{\la^2 + 14 \la + 48} & \frac{\la^2 + 40 \la - 108.002}{\la^2 + 14 \la + 48} \ea,
                       \]

\[
N= \ba{cc} \frac{\la-1}{(\la+4)(\la+9)} & \frac{\la-3}{(\la+10)(\la+11)} \\
                           \frac{1}{\la+9} & 0 \\
                           \frac{\la-1}{(\la+5)(\la+9)} & \frac{\la-2}{(\la+10)(\la+11)} \\ \ea, \quad 
 M=\ba{cc} \frac{\la-1}{\la+9} & 0 \\
                             0 & \frac{(\la-2)(\la-3)}{(\la+10)(\la+11)}\ea .                          \]

\noindent The remaining factors $\tilde X$ and $\tilde Y$ of the DCF are not needed here. We now proceed to finding a
solution for (\ref{mmEq}), which, according to Theorem~\ref{mm}, leads to the conclusion that a stabilizing controller in $\mathcal{S}$ exists. In addition,
we will use the aforementioned solution to compute a stabilizing controller.

Since there are two zeros in the sparsity pattern imposed by $K^{\mathrm{bin}}$, the system of equations in (\ref{mmEq}) has two (nontrivial) equations that are satisfied by the following element of  $\mathbb{A}^{p \times m}$:           

\[
Q=\ba{ccc} 1 & 0 & 1 \\
                0 & 1 & \frac{\la+8}{\la+7}\ea
\]

The resulting stabilizing central controller $K=Y^{-1}_QX_Q$ is given by

\[
K = \ba{cc}  \frac{\la+17}{\la+7} & 0 \\
                           754\frac{(\la+5.87)(\la-0.4525)}{(\la+4)(\la+5)(\la+6)(\la+8)}  & \frac{(\la+42.5389)(\la-2.5389)}{(\la+6)(\la+8)} \ea^{-1}
                           \ba{ccc} 0 & \frac{1}{\la+7} & 0 \\
                                      \frac{1}{\la+6} & 0 & \frac{1}{\la+8} \ea,
\]

\noindent  which is in $\mathcal{S}$.


\subsection{A Youla-like Parametrization of All Sparse, Stabilizing Controllers} \label{YoulaPASC}

In this subsection, we present an alternative statement to Theorem~ \ref{mm} that clarifies the differences between it and Youla's
classical parametrization. 

\hlinee
\begin{coro} \label{coro2} Let $\mathcal{S}$ be a given QI sparsity constraint and $(M,N,\tilde{M},\tilde{N},X,Y,\tilde{X},\tilde{Y})$
a DCF of $G$. Assume that there is a stabilizing controller in $\mathcal{S}$ and let $Q_0$ in $\mathbb{A}^{p \times m}$ 
be selected to satisfy (\ref{mmEq}). Any stabilizing controller in $\mathcal{S}$ can be written as $K_Q$, where
$Q$ is obtained as:
\begin{equation}
\label{Eq:1111amapril20}
Q=Q_0+Q_{\delta}
\end{equation} for some $Q_{\delta}$ in the {\bf(convex set)} specified by the following inclusions: 
\begin{equation}
\label{Eq:1112amapril20}
 \mathrm{vec}(Q_{\delta})\in \mathrm{Null}  \Big( \Phi  \big(M^T \otimes \tilde M \big)\Big), Q_{\delta} \in \mathbb{A}^{p \times m}
\end{equation}
where $\Phi$ is the matrix  defined in (\ref{defnPhi}). 
\end{coro}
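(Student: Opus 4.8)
The plan is to reduce Corollary~\ref{coro2} to a direct consequence of Theorem~\ref{mm} and Theorem~\ref{truemain}, together with the observation that the model-matching equation~(\ref{mmEq}) is an affine equation in $\mathrm{vec}(Q)$ whose solution set is therefore a coset of the null space of its linear part. First I would establish the two inclusions (``$\Leftarrow$''): given $Q_\delta$ satisfying~(\ref{Eq:1112amapril20}), set $Q = Q_0 + Q_\delta$. Since $Q_0 \in \mathbb{A}^{p \times m}$ and $Q_\delta \in \mathbb{A}^{p \times m}$, we have $Q \in \mathbb{A}^{p \times m}$. Applying the linear map $\Phi(M^T \otimes \tilde M)$ to $\mathrm{vec}(Q)$ and using linearity,
\begin{equation*}
\Phi(M^T \otimes \tilde M)\,\mathrm{vec}(Q) + \Phi\,\mathrm{vec}(\tilde X \tilde M) = \Phi(M^T \otimes \tilde M)\,\mathrm{vec}(Q_0) + \Phi\,\mathrm{vec}(\tilde X \tilde M) + \Phi(M^T \otimes \tilde M)\,\mathrm{vec}(Q_\delta) = 0 + 0 = 0,
\end{equation*}
so $Q$ also satisfies~(\ref{mmEq}); by the last sentence of Theorem~\ref{mm}, $K_Q$ is then a stabilizing controller in $\mathcal{S}$. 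This shows every $Q$ of the stated form yields an admissible stabilizing controller.

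Next I would prove the converse (``$\Rightarrow$''), which is the substantive direction: every stabilizing controller in $\mathcal{S}$ arises this way. Let $K$ be an arbitrary stabilizing controller in $\mathcal{S}$. By the necessity part of Theorem~\ref{truemain}, there is some $Q \in \mathbb{A}^{p \times m}$ with $K = K_Q$ and with both inequalities of~(\ref{adouaEq2}) holding; running the vectorization argument in the proof of Theorem~\ref{mm} in reverse, this $Q$ satisfies~(\ref{mmEq}). Now define $Q_\delta \overset{def}{=} Q - Q_0$. Then $Q_\delta \in \mathbb{A}^{p \times m}$ since both $Q$ and $Q_0$ are, and subtracting the two instances of~(\ref{mmEq}) for $Q$ and $Q_0$ (the inhomogeneous terms $\Phi\,\mathrm{vec}(\tilde X \tilde M)$ cancel) gives $\Phi(M^T \otimes \tilde M)\,\mathrm{vec}(Q_\delta) = 0$, i.e. $\mathrm{vec}(Q_\delta) \in \mathrm{Null}(\Phi(M^T \otimes \tilde M))$. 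Hence $Q = Q_0 + Q_\delta$ with $Q_\delta$ in the set described by~(\ref{Eq:1112amapril20}), and $K = K_Q$, as required.

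Finally I would verify the parenthetical claim that the set described by~(\ref{Eq:1112amapril20}) is convex. This is immediate: $\mathrm{Null}(\Phi(M^T \otimes \tilde M))$ is a linear subspace of the vector space of $\mathbb{R}(\lambda)$-valued (or, over the reals, appropriately coefficient-valued) vectors, hence convex; the constraint $Q_\delta \in \mathbb{A}^{p \times m}$ cuts out a convex cone (stable TFMs of fixed size form a convex set, indeed a module); and the intersection of two convex sets is convex. One subtlety worth a sentence: the null space should be understood entrywise over the field $\mathbb{R}(\lambda)$ — i.e. $Q_\delta$ ranges over TFMs whose vectorization is annihilated by the constant... actually $\Phi$ is constant but $M^T \otimes \tilde M$ has entries in $\bar{\mathbb{A}}(\lambda)$, so ``Null'' means the set of $\mathrm{vec}(Q_\delta)$ with $(M^T \otimes \tilde M)$-image killed by $\Phi$; this is still an $\mathbb{R}(\lambda)$-linear (indeed $\mathbb{R}$-linear) condition, so convexity is unaffected. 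I do not anticipate a genuine obstacle here: the whole corollary is essentially bookkeeping built on the affine structure of~(\ref{mmEq}) plus Theorems~\ref{truemain} and~\ref{mm}; the only care needed is to keep straight that the inhomogeneous term in~(\ref{mmEq}) is fixed once the DCF is fixed, so that the ``$\Rightarrow$'' direction genuinely produces a null-space element after subtracting $Q_0$.
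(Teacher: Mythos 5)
Your proposal is correct and follows essentially the same route as the paper, which simply states that the corollary ``follows directly from Theorem~\ref{mm}''; you have merely filled in the bookkeeping the authors leave implicit (the affine structure of (\ref{mmEq}), the subtraction producing a null-space element, and the convexity of the intersection). No gaps.
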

\hlinee
\begin{proof} The proof follows directly from Theorem~\ref{mm}.
\end{proof}

Corollary~\ref{coro2} unveils the fact that once any suitable $Q_0$ is found then the set of all stabilizing controllers
in $\mathcal{S}$ can be generated from the affine subspace specified by (\ref{Eq:1111amapril20})-(\ref{Eq:1112amapril20}).
Notice that in Youla's classical approach the parameter $Q$ is only required to be in $\mathbb{A}^{p \times m}$, while
the additional constraints in (\ref{Eq:1111amapril20})-(\ref{Eq:1112amapril20}) guarantee that the resulting controller
will be in $\mathcal{S}$. 

\subsection{Numerical Considerations}
 For an introduction to  linear subspaces for TFMs and vector bases of such subspaces we refer to \cite{Forney:1975vj}.  In addition, the authors of \cite{L2A} describe a systematic, state--space algorithm to determine a basis of the null space of $ \Phi  \big(M^T \otimes \tilde M \big)$. Note that the main result in \cite{L2A} enables the computation of a basis having only stable poles, by performing a column compression of the normal rank of $ \Phi  \big( M^T \otimes \tilde M \big)$ by post--multiplication with a unimodular matrix. Furthermore, this basis is also minimal, in the sense that the basis--matrix, obtained by juxtaposing the basis columns, has no Smith zeros. Hence, this may be used for the parametrization of all {\em stable} $\mathrm{vec}(Q)$ in $ \mathrm{Null}  \Big( \Phi  \big( M^T \otimes \tilde M \big)\Big)$.

For a numerical example illustrating Corollary~\ref{coro2}, from Subsection~\ref{YoulaPASC} above, we refer to Subsections~IV-C
and IV-E in \cite{Sabau:2011dy}.

\subsection{Norm-Optimal Control Design}

We now indicate how Theorem~\ref{truemain} can be used in conjunction with results from \cite{Rotkowitz:2006kz} to design norm-optimal controllers. 
In particular, given a quadratically invariant sparsity constraint $\mathcal{S}$ one may be interested in 
solving the following optimization problem:
\begin{equation}
\label{e:928p050512}
\min_{K \text{stabilizing and } K \in \mathcal{S}} \| H(G,K) \|
\end{equation} where $\| \cdot \|$ is a suitably defined operatorial norm.

Using Theorem~\ref{truemain} we can rewrite  (\ref{e:928p050512}) as follows:
\begin{equation}
\label{e:930p050512}
\min_{\begin{matrix} (MX + M Q \tilde M ) \in \mathcal{S} \\ Q \in \mathbb{A}^{p \times m} \end{matrix} } \| H(G,K_Q) \|
\end{equation} where we used the fact that the inequalities in (\ref{adouaEq2}) are equivalent to $\tilde X \tilde M + M Q \tilde M \in \mathcal{S}$ and
$MX + M Q \tilde M \in \mathcal{S}$. Notice that Theorem~\ref{truemain} guarantees that (\ref{e:928p050512}) is feasible if and only if (\ref{e:930p050512})
is feasible.

We proceed by noticing that the closed loop TFM for a given controller $K_Q$ can be written as:
\begin{subequations} \label{affine}
\begin{align}
H(G,K_Q)=\ba{cc} \tilde Y \tilde M & - \tilde Y \tilde N \\ \tilde X \tilde M & I- \tilde X \tilde N \ea + \ba{c} N \\ M \ea Q \ba{cc} \tilde M & \tilde N \ea \\
=\ba{cc} I - N X & - N Y  \\ M X & M Y \ea + \ba{c} N \\ M \ea Q \ba{cc} \tilde M & \tilde N \ea
\end{align}
\end{subequations} where we used the formulae available in \cite[pp.110]{V}. Hence, 
we can use (\ref{e:930p050512})  to rewrite (\ref{e:928p050512}) as follows:

\hlinee
\begin{equation}
\label{e:931p050512}
\min_{\begin{matrix} (MX + M Q \tilde M ) \in \mathcal{S} \\ Q \in \mathbb{A}^{p \times m} \end{matrix} } \| T_1+T_2QT_3 \|
\end{equation} 

\hlinee

\noindent where $T_1$, $T_2$ and $T_3$ are obtained from (\ref{affine}).

The analysis above implies that  the sparsity constrained disturbance attenuation problem (as introduced in \cite[(1)/pp. 276 ]{Rotkowitz:2006kz}), or the sparsity constrained mixed $\mathcal{H}^2$ sensitivity problem (from \cite[pp. 139]{Zhou}) can be solved as a model--matching problem via  the numerical technique in \cite[Theorem~29]{Rotkowitz:2006kz}.

\section{Block-decoupling and streamlined solutions}
\label{Sec:5}

In this Section, we consider that the input and the output vectors of $G$ are partitioned into blocks so that, under certain conditions, $G$ can be factored in a special form that simplifies 
both the solution of the exact model matching problem of Theorem~\ref{mm} and the parametrization in Corollary~\ref{coro2}. Henceforward, we consider the following notation:

\begin{tabular}{l l}
& \\
 $r_y $ & number of partitions of $y$  \\
 $\{y_{[i]}\}_{i=1}^{r_y}$ & partitions of the output vector\\
 $m_i$ & dimension of $y_{[i]}$ 
 \end{tabular}
 
 \begin{tabular}{l l}
& \\
 $r_u $ & number of partitions of $u$  \\
$\{u_{[i]}\}_{i=1}^{r_u}$ & partitions of the input vector\\
 $p_i$ & dimension of $u_{[i]}$ \\
 &
 \end{tabular}

The partitions are constructed in a way that the following holds:

\begin{equation}
\label{partition}
\begin{matrix}
y^T & = & \left[ y_{[1]}^T \cdots y_{[r_y]} ^T \right]^T, & & \sum_{i=1}^{r_y} m_i & = & m\\ 
& & \\
u^T & = & \left[ u_{[1]}^T \cdots u_{[r_u]} ^T \right]^T, & & \sum_{i=1}^{r_u} p_i & = & p
\end{matrix}
\end{equation} 

Similarly, we also consider the partitioning of $G$ and $K$ as:

\begin{equation}
\label{partitieG}
\begin{matrix}
G  & = & \left[
\begin{matrix}
G_{[11]} & \cdots & G_{[1r_u]} \\
\vdots & \ddots & \vdots \\
 G_{[r_y1]} & \cdots & G_{[r_yr_u]}
 \end{matrix}
 \right] \\ & & \\
 K  & = & \left[
\begin{matrix}
K_{[11]} & \cdots & K_{[1r_y]} \\
\vdots & \ddots & \vdots \\
 K_{[r_u1]} & \cdots & K_{[r_ur_y]}
 \end{matrix}
 \right]
 \end{matrix}
 \end{equation}
 
 \begin{assumption} 
 \label{assumption:448pm27april2012}
 Throughout this Section, we assume that $G$ and an associated partition of the input and output (\ref{partition}) are given.
 \end{assumption}
 
 \begin{rem} Given factorizations of $G$ and $K$ as $G=\tilde{M}^{-1}\tilde{N}=NM^{-1}$ and $K=\tilde{X}\tilde{Y}^{-1}=Y^{-1}X$, respectively, the partition in (\ref{partition}) will induce a unique block-partition structure on the factors $N$, $M$, $\tilde{N}$, $\tilde{M}$, $X$, $Y$, $\tilde{X}$ and $\tilde{Y}$ as well.
 \end{rem}

\subsection{Input/Output Decoupled Coprime Factorizations}

We start by defining input and output decoupled factorizations for $G$.

\begin{defn} \label{ODLCF} Let $\tilde{N}$ and $\tilde{M}$  be a factorization of $G$. The pair $(\tilde{N},\tilde{M})$ is called output decoupled if $\tilde M$ has the following block diagonal structure:
\begin{equation}
\label{eq:147pm29apr12}
\tilde{M} = \diag ( \{ \tilde{M}_{[ii]} \}_{i=1}^{r_y} )
\end{equation} where $\diag ( \{ \tilde M_{[ii]} \}_{i=1}^{r_y} )$ is defined as:
\begin{equation} 
 \diag ( \{ \tilde{M}_{[ii]} \}_{i=1}^{r_y} ) \overset{def}{=} \left[ \begin{matrix} \tilde M_{[11]} & 0 & \cdots & 0 \\ 0 & \tilde M_{[22]} & \cdots & 0 \\ \vdots & &\ddots &\vdots \\ 0 & 0 & \cdots & \tilde M_{[r_yr_y]} \end{matrix} \right]
\end{equation}
\end{defn}

\begin{defn} \label{IDRCF} Let $N$ and $M$  be a factorization of $G$. The pair $(N,M)$ is called input decoupled if $M$ has the following block diagonal structure:
\begin{equation}
\label{eq:148pm29apr12}
M = \diag ( \{ M_{[ii]} \}_{i=1}^{r_u} )
\end{equation}
\end{defn}

\begin{rem}
\label{rem618pm27apr12}  Notice that an output decoupled factorization can always be constructed by factoring each block row of $G$ separately as follows:
\begin{equation}
\left[ G_{[i1]} \cdots G_{[i r_u]} \right] = \tilde M^{-1}_{[ii]} \left[ \tilde N_{[i1]} \cdots \tilde N_{[i r_u]} \right] , \qquad i \in \overline{1,r_y}
\end{equation} An input decoupled factorization can also be constructed by factoring the block columns of $G$.
\end{rem}

\begin{defn} \label{IOdecDCF} A DCF $\big(M, N$, $\tilde M, \tilde N$, $X, Y$, $\tilde X, \tilde Y\big)$ of $G$ is called input/output decoupled if the pairs $(N,M)$ and $(\tilde N, \tilde M)$ are input and output decoupled, respectively.
\end{defn}

It is important to notice that the procedure outlined in Remark~\ref{rem618pm27apr12} \underline{does not} guarantee that the pairs $(N,M)$ and $(\tilde N, \tilde M)$
will be co-prime, much less doubly co-prime. In fact, $G$ may not admit an input/output decoupled DCF. Sufficient conditions and algorithms to obtain an input/output decoupled DCF for $G$ are provided in the Appendix I.

There are two substantial benefits of working with an input/output DCF for G: The first is that the constraint on $Q$ in Theorem~\ref{truemain} reduces to $Q \in \mathcal{S} \bigcap \mathbb{A}^{p \times m}$, which leads to
a parametrization of all stabilizing controllers that has a simpler characterization. The second advantage is that the exact model-matching problem of Theorem~\ref{mm} admits a unique solution that
can be easily computed (see Section~\ref{sec:ModelMatching-2}).

\subsection{Theorem~\ref{truemain} Revisited}


Here, we {\bf modify the definitions of Section~\ref{Sec:3}} so that they account for the {\bf assumed} {\bf input/output}  {\bf partition} in ({\ref{partition}). 
More specifically, {\bf sparsity constraints will be imposed on entire block sub-matrices of} $K$. The definitions in
Section~\ref{Sec:3} can be recovered from the ones below for the case when the block sub-matrices have dimension one, i.e., provided that $r_y =m$ and $r_u =p$.

\begin{defn}
Given $K$ in $\mathbb{R}(\lambda)^{p \times m}$, we define $ \wp(K)  \in \mathbb{B} ^{r_u \times r_y}$  as follows:
\begin{equation}
\wp(K)_{ij}\overset{def}{=}\left \{
\begin{array}{l} 0  \quad \mathrm{if } \qquad  K_{[ij]}=0_{p_i \times m_j}  \\
                     1 \quad \mathrm{otherwise} \end{array} \right.  \qquad   i,j \in \overline{1,r_u}  \times \overline{1,r_y}
\end{equation} where $0_{p_i \times m_j}$ is a matrix with $p_i$ rows and $m_j$ columns and whose entries are all zero.
\end{defn}

\begin{defn} Conversely, for any binary matrix $\displaystyle K^\mathrm{bin}$  in $\mathbb{B} ^{r_u \times r_y}$, we define the following linear subspace:
\begin{equation}
\mathbb{S}(K^\mathrm{bin}) \overset{def}{=} \Big{\{} K \in  \mathbb{R}(\la)^{p\times m} \big{|} \wp (K) \leq K^\mathrm{bin} \Big{\}}
\end{equation}
\end{defn}


\begin{defn} Given $K^{\mathrm{bin}}$ in $\mathbb{B}^{r_u \times r_y}$, the {\em sparsity constraint} $\mathcal{S}$ is defined as:
\begin{equation}
\mathcal{S} \overset{def}{=}\mathbb{S}(K^{\mathrm{bin}}),
\end{equation}
\end{defn}

Hence,  $\mathcal{S}$ is the subspace of all controllers $\displaystyle K$ in $\displaystyle  \mathbb{R}(\la)^{p \times m}$ for which $K_{[ij]}=0_{p_i \times m_j}$ whenever $\displaystyle K^\mathrm{bin}_{ij}=0$. 
In addition, we assume that $\mathbb{S}$ and $\wp$ act on $G$ and $G^{\mathrm{bin}}$ as well as on the factors of any DCF of $G$ in an analogous way. 

\begin{rem}As a consequence of the definitions above, the following holds for any input/output decoupled DCF of $G$:
\begin{equation}
\begin{matrix}
\wp(M) & = & I_{r_u \times r_u}, & & \wp(N) &\underset{(a)}{\leq} & G^{bin} \\
\wp(\tilde M) & = & I_{r_y \times r_y}, & & \wp(\tilde N)  & \underset{(b)}{\leq} & G^{bin} 
\end{matrix}
\end{equation} where (a)-(b) follow from (\ref{eq:147pm29apr12}) and the fact that $G=\tilde M ^{-1}N = NM^{-1}$.
\end{rem}


The following Corollary is an immediate consequence of Theorem~\ref{truemain} and the facts that $\wp(M)  = I_{r_u \times r_u}$, $\wp(\tilde M) =  I_{r_y \times r_y}$,
and $M^{-1}$ and $\tilde M^{-1}$ are well defined TFMs.

\begin{coro}
\label{coro920pm29apr12}
Let  $\big(M, N$, $\tilde M, \tilde N$, $X, Y$, $\tilde X, \tilde Y\big)$ be an input/output decoupled DCF of $G$ and $\mathcal{S}$ be a 
 QI sparsity constraint.  
 \begin{itemize}
 \item {\bf Sufficiency:}
If $Q$ in $\mathbb{A}^{p\times m}$ is such that at least one of the following inequalities holds\footnote{In fact, it also follows from the statement of the Theorem that either both inequalities hold or none.}:
\begin{subequations}
\label{adouaEq}
\begin{align}
\wp (\tilde X_Q)  \leq  K^{\mathrm{bin}} \\ 
 \wp(X_Q)  \leq  K^{\mathrm{bin}}
\end{align}
\end{subequations} then $K_Q$ is a stabilizing controller in $\mathcal{S}$. 

\item {\bf Necessity:} If there is a stabilizing controller in $\mathcal{S}$ then there exists some $Q$ in $\mathbb{A}^{p\times m}$
for which both inequalities in  (\ref{adouaEq}) hold and, in addition, the controller can be written as $K_Q$.
\end{itemize}
\end{coro}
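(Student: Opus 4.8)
\textbf{Proof proposal for Corollary~\ref{coro920pm29apr12}.}

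The plan is to deduce this corollary directly from Theorem~\ref{truemain} by exploiting the block-diagonal structure of the denominators $M$ and $\tilde M$ of an input/output decoupled DCF. The key observation is that for such a DCF we have $\wp(M) = I_{r_u \times r_u}$ and $\wp(\tilde M) = I_{r_y \times r_y}$ (as recorded in the Remark preceding the statement), and moreover $M$ and $\tilde M$ are invertible TFMs with $M^{-1}$, $\tilde M^{-1}$ block-diagonal as well. The heart of the argument is therefore a sparsity-preservation lemma: for any TFM $A$ of compatible block dimensions, left-multiplication by $M$ and right-multiplication by $\tilde M$ do not change the block sparsity pattern, i.e. $\wp(M A) = \wp(A)$ and $\wp(A \tilde M) = \wp(A)$, and likewise with $M^{-1}$, $\tilde M^{-1}$. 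This holds precisely because multiplying a block matrix on the left by an invertible block-diagonal matrix scales the $i$-th block row by the invertible block $M_{[ii]}$, which sends a zero block to a zero block and a nonzero block to a nonzero block (using that $M_{[ii]}$ is invertible over $\mathbb{R}(\lambda)$); similarly on the right with $\tilde M$.

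First I would establish that lemma, treating the left- and right-multiplication cases separately but identically. Then for the \textbf{necessity} direction: given a stabilizing controller in $\mathcal{S}$, Theorem~\ref{truemain} furnishes $Q \in \mathbb{A}^{p \times m}$ with $\wp(\tilde X_Q \tilde M) \leq K^{\mathrm{bin}}$ and $\wp(M X_Q) \leq K^{\mathrm{bin}}$, and the controller equals $K_Q$. Applying the sparsity-preservation lemma gives $\wp(\tilde X_Q \tilde M) = \wp(\tilde X_Q)$ and $\wp(M X_Q) = \wp(X_Q)$, which immediately yields (\ref{adouaEq}). For \textbf{sufficiency}: assume $Q \in \mathbb{A}^{p \times m}$ satisfies one of the inequalities in (\ref{adouaEq}), say $\wp(\tilde X_Q) \leq K^{\mathrm{bin}}$; then by the lemma $\wp(\tilde X_Q \tilde M) = \wp(\tilde X_Q) \leq K^{\mathrm{bin}}$, so (\ref{adouaEq2}) holds and Theorem~\ref{truemain} delivers that $K_Q$ is a stabilizing controller in $\mathcal{S}$. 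The footnote claim (both inequalities hold or neither) is inherited from Theorem~\ref{truemain} together with the equality $\wp(\tilde X_Q) = \wp(\tilde X_Q \tilde M)$, $\wp(X_Q) = \wp(M X_Q)$.

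I expect the only genuine point requiring care — the main (modest) obstacle — to be the sparsity-preservation lemma, specifically verifying that an invertible block-diagonal factor cannot annihilate a nonzero block: one must argue that if $B$ is an invertible TFM and $A$ is a TFM with $BA = 0$ then $A = 0$, which is immediate, but one also needs the converse direction handled correctly, namely that $B A$ nonzero forces $A$ nonzero, which is trivial. The slightly more delicate bookkeeping is that in the block setting $\wp$ now returns an $r_u \times r_y$ binary matrix (per the redefined Sparse/Pattern operators of this section), so one must confirm the block dimensions of $\tilde X_Q$, $X_Q$ match those assigned by the induced partition of the DCF factors, which follows from the Remark on induced block-partition structure. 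Everything else is a direct substitution into Theorem~\ref{truemain}, so the proof will be short.
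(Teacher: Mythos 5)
Your proposal is correct and matches the paper's intended argument: the paper gives no detailed proof, stating only that the Corollary is ``an immediate consequence of Theorem~\ref{truemain} and the facts that $\wp(M)=I_{r_u\times r_u}$, $\wp(\tilde M)=I_{r_y\times r_y}$, and $M^{-1}$ and $\tilde M^{-1}$ are well defined TFMs,'' which is precisely the sparsity-preservation observation ($\wp(MX_Q)=\wp(X_Q)$, $\wp(\tilde X_Q\tilde M)=\wp(\tilde X_Q)$ for invertible block-diagonal $M$, $\tilde M$) that you make explicit. Your write-up simply fills in the routine verification the authors left implicit.
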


The following Corollary is the main result of this section.

\begin{coro} \label{cor:839pm29apr12} Let $\mathcal{S}$ be a given QI sparsity constraint and $(M,N,\tilde{M},\tilde{N},X,Y,\tilde{X},\tilde{Y})$
an input/output decoupled DCF of $G$. Assume that there is a stabilizing controller in $\mathcal{S}$ and let $Q_0$ in $\mathbb{A}^{p \times m}$ 
be selected to satisfy (\ref{adouaEq}). Any stabilizing controller in $\mathcal{S}$ can be written as $K_{Q}$, where
$Q$ is obtained as:
\begin{equation} 
Q=Q_0+Q_{\delta}, \qquad Q_{\delta} \in \mathcal{S} \cap \mathbb{A}^{p\times m}
\end{equation} 
\end{coro}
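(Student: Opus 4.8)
The plan is to derive Corollary~\ref{cor:839pm29apr12} by specializing Corollary~\ref{coro2} to the input/output decoupled setting and then using the block-diagonal structure $\wp(M)=I_{r_u\times r_u}$, $\wp(\tilde M)=I_{r_y\times r_y}$ to identify the convex set in~\eqref{Eq:1112amapril20} with $\mathcal{S}\cap\mathbb{A}^{p\times m}$. First I would invoke Corollary~\ref{coro920pm29apr12}: since $M$ and $\tilde M$ are invertible TFMs and $\wp(M)$, $\wp(\tilde M)$ are identities, the conditions~\eqref{adouaEq2} collapse to~\eqref{adouaEq}, so that $Q_0\in\mathbb{A}^{p\times m}$ satisfies~\eqref{adouaEq} precisely when it satisfies~\eqref{mmEq}; this is consistent with the hypothesis that such a $Q_0$ exists. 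By Corollary~\ref{coro2} every stabilizing controller in $\mathcal{S}$ is $K_Q$ with $Q=Q_0+Q_\delta$ and $\operatorname{vec}(Q_\delta)\in\operatorname{Null}\bigl(\Phi(M^T\otimes\tilde M)\bigr)$, $Q_\delta\in\mathbb{A}^{p\times m}$.

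The core step is to show $\operatorname{vec}(Q_\delta)\in\operatorname{Null}\bigl(\Phi(M^T\otimes\tilde M)\bigr)$ if and only if $Q_\delta\in\mathcal{S}$. I would argue this directly: $\Phi(M^T\otimes\tilde M)\operatorname{vec}(Q_\delta)=\Phi\operatorname{vec}(\tilde M\,Q_\delta\,M^T{}^{\!\!T})$ — more precisely, by the vectorization identity, $(M^T\otimes\tilde M)\operatorname{vec}(Q_\delta)=\operatorname{vec}(\tilde M Q_\delta M)$. Because $\tilde M$ and $M$ are (block) diagonal with $\wp(\tilde M)=I$, $\wp(M)=I$, the block sparsity pattern of $\tilde M Q_\delta M$ equals that of $Q_\delta$: multiplying on the left by an invertible block-diagonal $\tilde M$ permutes no blocks and leaves each block row's support intact, and similarly for right multiplication by $M$, since the $(i,j)$ block of $\tilde M Q_\delta M$ is $\tilde M_{[ii]}(Q_\delta)_{[ij]}M_{[jj]}$, which is zero iff $(Q_\delta)_{[ij]}$ is zero (invertibility of the diagonal blocks). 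Hence $\wp(\tilde M Q_\delta M)=\wp(Q_\delta)$. Applying the $\Phi$-projection as in the proof of Theorem~\ref{mm}, $\Phi(M^T\otimes\tilde M)\operatorname{vec}(Q_\delta)=0$ says exactly that every block entry of $\tilde M Q_\delta M$ indexed by a zero of $K^{\mathrm{bin}}$ vanishes, i.e. $\wp(\tilde M Q_\delta M)\leq K^{\mathrm{bin}}$, which by the pattern identity is $\wp(Q_\delta)\leq K^{\mathrm{bin}}$, i.e. $Q_\delta\in\mathcal{S}$. Combined with $Q_\delta\in\mathbb{A}^{p\times m}$ this gives $Q_\delta\in\mathcal{S}\cap\mathbb{A}^{p\times m}$, and conversely any such $Q_\delta$ lies in the null space.

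\textbf{Main obstacle.} The only delicate point is the bookkeeping in the claim $\wp(\tilde M Q_\delta M)=\wp(Q_\delta)$ at the block level: one must be careful that the ``$\wp$'' and ``$\operatorname{vec}$''/``$\operatorname{diag}$'' operators used in Section~\ref{Sec:5} are the block versions, so that $\Phi=I-\operatorname{diag}(K^{\mathrm{bin}})$ now has one diagonal entry per block rather than per scalar entry, and that the Kronecker-product vectorization identity $(M^T\otimes\tilde M)\operatorname{vec}(Q_\delta)=\operatorname{vec}(\tilde M Q_\delta M)$ is applied with the ordinary (scalar) vec while the projection $\Phi$ acts consistently on blocks. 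Once the indexing is aligned, the equivalence is immediate from invertibility of the diagonal blocks $\tilde M_{[ii]}$, $M_{[jj]}$; I would state this invertibility as the one substantive fact (it follows since $\tilde M$, $M$ are invertible and block diagonal) and keep the rest as routine verification, citing Corollary~\ref{coro2} and Corollary~\ref{coro920pm29apr12} for the remainder.
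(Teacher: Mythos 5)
Your proof follows essentially the same route as the paper's: specialize Corollary~\ref{coro2} via Corollary~\ref{coro920pm29apr12} and Theorem~\ref{mm}, and then use the block-diagonality and invertibility of $M$ and $\tilde M$ to show that the null-space condition (\ref{Eq:1112amapril20}) reduces to $Q_{\delta} \in \mathcal{S} \cap \mathbb{A}^{p \times m}$ --- the paper compresses this last step into one sentence, whereas you spell out the block-level computation. The only blemish is a harmless transposition (the product arising from Theorem~\ref{mm} is $M Q_{\delta} \tilde M$, not $\tilde M Q_{\delta} M$, which is not even dimensionally consistent for $p \neq m$), but your key observation that each block $M_{[ii]}(Q_{\delta})_{[ij]}\tilde M_{[jj]}$ vanishes if and only if $(Q_{\delta})_{[ij]}$ does is exactly the justification the paper intends.
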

\begin{proof} From Corollary~\ref{coro920pm29apr12} and Theorem~\ref{mm}, it follows that since $Q_0$ satisfies (\ref{adouaEq}) then it will
also satisfy (\ref{mmEq}). Hence, from Corollary~\ref{coro2}, any stabilizing controller in $\mathcal{S}$ can be written as $K_Q$, with
$Q=Q_0+Q_{\delta}$, where $Q_{\delta}$ satisfies (\ref{Eq:1112amapril20}). The proof follows from noticing that since $M \otimes \tilde M$
is block diagonal and its inverse is a well defined TFM, $Q$ satisfies (\ref{Eq:1112amapril20}) if and only if $ \mathrm{vec} (Q_{\delta}) \in \mathrm{Null} (\Phi) \cap \mathbb{A}^{p \times m}$ holds,
or equivalently $Q_{\delta}$ is in $\mathcal{S}  \cap \mathbb{A}^{p \times m}$.
\end{proof}


\subsection{Theorem~\ref{mm} revisited }
\label{sec:ModelMatching-2}

In this subsection, we show that if an input/output decoupled DCF of $G$ exists then we can use Corollary~\ref{coro920pm29apr12} to obtain a simplified version of Theorem~\ref{mm}. A precise statement of this result is given 
in Corollary~\ref{atreia}.


\begin{defn}
We define the binary matrix $K^{\mathrm{bin}}_\perp $ belonging to the set  $\mathbb{B}^{r_u \times r_y}$ as follows:
\begin{equation} \label{Kbinperp}
\big( { K^{\mathrm{bin}}_\perp  }\big)_{ij}\overset{def}{=}\left \{
\begin{array}{l} 1  \quad \mathrm{if } \; K^{\mathrm{bin}}_{ij}=0,  \\
                     0 \quad \mathrm{otherwise}\: . \end{array} \right.
\end{equation}
\end{defn}

\begin{defn} Given $\displaystyle K^\mathrm{{bin}}_\perp  \in \mathbb{B}^{r_u \times r_y}$  we introduce the linear subspace $\mathcal{S}_\perp$ of $\displaystyle  \mathbb{R}(\la)^{p\times m}$ as
\begin{equation}
\label{DefnSperp}
\mathcal{S}_\perp \overset{def}{=}\Big{\{} K \in  \mathbb{R}(\la)^{p\times m} \Big{|} \: \wp(K)\leq K^{\: \mathrm{bin}}_\perp \Big{\}}.
\end{equation}
\end{defn}

\begin{coro}
\label{atreia} 
Let $\big(M, N$, $\tilde M, \tilde N$, $X, Y$, $\tilde X, \tilde Y\big)$ be an input/output decoupled DCF of $G$. Given a QI sparsity constraint $\mathcal{S}$, $G$ is stabilizable by a controller in $\mathcal{S}$ if and only if $M^{-1} \tilde{X}_{\mathcal{S}_{\perp}}$ is in $\mathbb{A}^{p \times m}$, where $\tilde{X}_{\mathcal{S}_{\perp}}$ results from the additive factorization $\tilde X = \tilde{X}_{\mathcal{S}} + \tilde{X}_{\mathcal{S}_{\perp}}$ satisfying $\wp(\tilde{X}_{\mathcal{S}}) \leq K^{\mathrm{bin}}$ and $\wp(\tilde{X}_{\mathcal{S}_{\perp}}) \leq K^{\mathrm{bin}}_{\perp}$.
\end{coro}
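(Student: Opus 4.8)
The plan is to derive Corollary~\ref{atreia} directly from Corollary~\ref{coro920pm29apr12}, exploiting the fact that for an input/output decoupled DCF we have $\wp(M) = I_{r_u \times r_u}$ and $\wp(\tilde M) = I_{r_y \times r_y}$, so that $M^{-1}$ and $\tilde M^{-1}$ are well-defined TFMs and right/left multiplication by them preserves the block-sparsity structure. First I would recall from Corollary~\ref{coro920pm29apr12} that $G$ is stabilizable by a controller in $\mathcal{S}$ if and only if there exists $Q$ in $\mathbb{A}^{p\times m}$ with $\wp(\tilde X_Q) \leq K^{\mathrm{bin}}$, i.e., with $\tilde X + MQ \in \mathcal{S}$. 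Using the additive decomposition $\tilde X = \tilde X_{\mathcal S} + \tilde X_{\mathcal S_\perp}$ with $\wp(\tilde X_{\mathcal S}) \leq K^{\mathrm{bin}}$ and $\wp(\tilde X_{\mathcal S_\perp}) \leq K^{\mathrm{bin}}_\perp$ (which is unique: each entry of $\tilde X$ is routed to exactly one of the two summands according to whether the corresponding $K^{\mathrm{bin}}$-block is $1$ or $0$), the condition $\tilde X + MQ \in \mathcal{S}$ becomes $\tilde X_{\mathcal S_\perp} + MQ \in \mathcal{S}$, since $\tilde X_{\mathcal S}$ is already in $\mathcal{S}$ and $\mathcal{S}$ is a subspace.

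Next I would argue that the only way to cancel the part of $\tilde X_{\mathcal S_\perp}$ that lives in $\mathcal{S}_\perp$ is to choose $MQ$ to agree with $-\tilde X_{\mathcal S_\perp}$ on the complement pattern. Because $M$ is block diagonal with $\wp(M) = I_{r_u \times r_u}$, right multiplication by $M$ maps each block row of $Q$ into the corresponding block row, hence preserves the sparsity pattern in the sense that $\wp(MQ) = \wp(Q)$ (more carefully, $(MQ)_{[ij]} = M_{[ii]} Q_{[ij]}$, so $(MQ)_{[ij]} = 0$ iff $Q_{[ij]} = 0$ since $M_{[ii]}$ is square and invertible). Therefore the $\mathcal{S}_\perp$-component of $\tilde X_{\mathcal S_\perp} + MQ$ vanishes if and only if the $\mathcal{S}_\perp$-component of $MQ$ equals $-\tilde X_{\mathcal S_\perp}$; projecting onto $\mathcal{S}_\perp$ forces $M Q_{\mathcal S_\perp} = -\tilde X_{\mathcal S_\perp}$, i.e., $Q_{\mathcal S_\perp} = -M^{-1}\tilde X_{\mathcal S_\perp}$, where $Q_{\mathcal S_\perp}$ denotes the $\mathcal{S}_\perp$-part of $Q$. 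The remaining $\mathcal{S}$-part of $Q$ is free (modulo stability). Thus a valid $Q$ in $\mathbb{A}^{p\times m}$ exists if and only if this forced component $-M^{-1}\tilde X_{\mathcal S_\perp}$ is itself stable, i.e., $M^{-1}\tilde X_{\mathcal S_\perp} \in \mathbb{A}^{p\times m}$ — note that $\tilde X_{\mathcal S}$ is automatically stable (being an additive piece of the stable TFM $\tilde X$), so the $\mathcal{S}$-part of $Q$ can always be taken to be $0$, and stability of the whole $Q$ reduces to stability of its $\mathcal{S}_\perp$-part.

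Finally I would note the bookkeeping point that, since $M^{-1}\tilde X_{\mathcal S_\perp}$ lies in $\mathcal{S}_\perp$ (right multiplication by $M^{-1}$, like $M$, preserves block rows), the quantity $M^{-1}\tilde X_{\mathcal S_\perp}$ is exactly the candidate for $-Q_0$ restricted to the complementary pattern, which ties the statement back to $Q_0$ in Corollary~\ref{cor:839pm29apr12}. The main obstacle I anticipate is being careful with the claim that the decomposition of $Q$ into $\mathcal{S}$- and $\mathcal{S}_\perp$-components interacts correctly with multiplication by $M$: one must verify that $M$ being block-diagonal and entrywise-identity in pattern genuinely decouples the equation $\tilde X_{\mathcal S_\perp} + MQ \in \mathcal{S}$ into independent conditions on the $\mathcal{S}$- and $\mathcal{S}_\perp$-parts, with no "leakage" between patterns — this is where the input/output decoupled hypothesis is essential and where the argument would fail for a generic DCF. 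Everything else is routine algebra with the $\wp$ operator and the subspace structure of $\mathcal{S}$.
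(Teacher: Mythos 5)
Your proposal is correct and follows essentially the same route as the paper's own proof: both reduce the question via Corollary~\ref{coro920pm29apr12} to the existence of a stable $Q$ with $\tilde X + MQ \in \mathcal{S}$, split $\tilde X$ and $Q$ into their $\mathcal{S}$- and $\mathcal{S}_\perp$-parts, use the block-diagonality of $M$ to decouple the two patterns, and conclude that the forced component $Q_{\mathcal{S}_\perp} = -M^{-1}\tilde X_{\mathcal{S}_\perp}$ is unique so that feasibility is exactly its stability. Your extra checks (that $\wp(MQ)=\wp(Q)$ blockwise and that $Q_{\mathcal{S}}$ may be taken to be $0$) are the same observations the paper makes, just spelled out in more detail.
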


\begin{proof} We start by restating the first equation of (\ref{adouaEq}) for any $Q$ as:
\begin{equation}
\label{e115042712}
\wp(\tilde X+MQ) =  \wp \big (  \tilde{X}_{\mathcal{S}} + M Q_{\mathcal{S}} + M ( M^{-1} \tilde{X}_{\mathcal{S}_{\perp}} +   Q_{\mathcal{S}_{\perp}} )   \big ) \leq  K^{\mathrm{bin}}
\end{equation} where $Q = Q_{\mathcal{S}} + Q_{\mathcal{S}_{\perp}} $ and $\wp(Q_{\mathcal{S}}) \leq K^{\mathrm{bin}}$ and $\wp(Q_{\mathcal{S}_{\perp}}) \leq K^{\mathrm{bin}}_{\perp}$.

We now recall that according to Corollary~\ref{coro920pm29apr12}, $G$ is stabilizable by a controller in $\mathcal{S}$ if and only if there is $Q$ in $\mathbb{A}^{p \times m}$ so that (\ref{e115042712}) holds.
However, given the fact that $M^{-1}$ is block diagonal, (\ref{e115042712}) holds for some $Q$ in $\mathbb{A}^{p \times m}$ if and only if $M^{-1} \tilde{X}_{\mathcal{S}_{\perp}} = -   Q_{\mathcal{S}_{\perp}} $
has a solution where $Q_{\mathcal{S}_{\perp}}$ is in $\mathbb{A}^{p \times m}$. Since $M^{-1} \tilde{X}_{\mathcal{S}_{\perp}} = -   Q_{\mathcal{S}_{\perp}} $ has a unique solution because $M$ is
invertible, we conclude that there exists $Q_{\mathcal{S}_{\perp}}$ is in $\mathbb{A}^{p \times m}$ satisfying (\ref{e115042712}) if and only if $M^{-1} \tilde{X}_{\mathcal{S}_{\perp}}$ is in $\mathbb{A}^{p \times m}$.
Notice that  $\wp(M^{-1} \tilde{X}_{\mathcal{S}_{\perp}}) \leq K^{\mathrm{bin}}_{\perp}$ always holds because $M$ is block diagonal.
\end{proof}

 \section{Conclusions} 
 \label{Sec:6}
We address the design of stabilizing controllers subject to a pre-selected quadratically invariant sparsity pattern. We show that the previously unsolved problem of determining stabilizability with sparsity constraints is equivalent to the solvability of an exact model-matching system of equations that is tractable via existing techniques, and we also outline a systematic method to compute an admissible controller. The proposed analysis also leads to a convex parametrization that is an extension of Youla's classical result so as to incorporate sparsity constraints on the set of stabilizing controllers. We indicate how this parametrization can be used to write sparsity-constrained norm-optimal control problems in convex form.

\section*{Appendix I}

This Appendix has two parts. In the first part  we give a sufficient condition that guarantees the existence of an output (input) decoupled left (right) {\em coprime} factorization for $G$, as  in Definitions~\ref{ODLCF}  and~\ref{IDRCF}. In the second part, we show that if  $G$ admits the aforementioned factorizations then there is a state--space method for computing its input/output decoupled DCF of Definition~\ref{IOdecDCF}. 

\subsection{A Sufficient Condition for the Existence of an Output (Input) Decoupled Left (Right) Coprime Factorization}
 We are given a plant $G$, partitioned as in (\ref{partitieG}).  As described in Remark~\ref{rem618pm27apr12},  we perform a \underline{left coprime} factorization for each of the $r_y$ block--rows of $G$ (such left coprime factorizations always exist and can be computed using the classical state--space methods  from \cite{NETT:1984un, Khargonekar:1982vm}) in order to get
\begin{equation} \label{row}
\left[ G_{[i1]} \cdots G_{[i r_u]} \right] = \tilde M^{-1}_{[ii]} \left[ \tilde N_{[i1]} \cdots \tilde N_{[i r_u]} \right] , \qquad i \in \overline{1,r_y}.
\end{equation}
\noindent  where the poles of $M_{[ii]}$ can be placed anywhere in the stability domain $\Omega$. 

The following proposition gives a necessary and sufficient condition under which the row factorizations in (\ref{row}) can be concatenated to produce a left decoupled coprime factorization for $G$. It should be noted that a left decoupled coprime factorization for $G$ may exist that cannot be constructed from the row factorizations in (\ref{row}). This indicates that the proposition is only a sufficient condition for the existence of a left decoupled coprime factorization for $G$.


\begin{prop} Let $(\tilde{M},\tilde{N})$ be an output decoupled left factorization derived from the row coprime factorizations (\ref{row}) as follows:

\begin{equation}
 \left[
\begin{matrix} \label{ec1234}
G_{[11]} & \cdots & G_{[1r_u]} \\
\vdots & \ddots & \vdots \\
 G_{[r_y1]} & \cdots & G_{[r_yr_u]}
 \end{matrix}
 \right] = \left[ \begin{matrix} \tilde M^{-1}_{[11]}  & \cdots & 0  \\ \vdots & \ddots &\vdots \\ 0  & \cdots & \tilde M^{-1}_{[r_yr_y]} \\ \end{matrix}
 \right]   \left[  \begin{matrix} \tilde N_{[11]} & \cdots & \tilde N_{[1 r_u]} \\ \vdots & \ddots & \vdots \\  \tilde N_{[r_y1]} & \cdots & \tilde N_{[r_y r_u]} \\ \end{matrix} \right] 
\end{equation} and consider $\Psi$ to  be the following TFM:
\[
\Psi \overset{def}{=}\left[ \begin{matrix} \tilde M_{[11]}  & \cdots & 0 &  \tilde N_{[11]} & \cdots & \tilde N_{[1 r_u]}  \\ \vdots  & \ddots  &\vdots  & \vdots & & \vdots \\ 0  & \cdots & \tilde M_{[r_yr_y]} & \tilde N_{[r_y1]} & \cdots & \tilde N_{[r_y r_u]}   \end{matrix} \right] 
\]
The following holds:
\begin{enumerate}
\item The output decoupled left factorization in (\ref{ec1234}) is coprime if and only if the following holds:
\begin{equation}
\label{testolc}
\rank(\Psi(\lambda)) = m, \qquad \lambda \in \Lambda_G 
\end{equation} where $\Lambda_G$ represents the set of unstable poles of $G$.
\item The condition in (\ref{testolc}) does not depend on the choice of the row coprime factorizations (\ref{row}).
\end{enumerate}
\end{prop}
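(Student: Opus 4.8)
The plan is to reduce everything to the standard pointwise-rank test for coprimeness over $\Omega$: a stable left factorization $G=\tilde M^{-1}\tilde N$ is left coprime if and only if the compound matrix $[\,\tilde M(\lambda)\ \ \tilde N(\lambda)\,]$ has full row rank $m$ at every point of $\boC\setminus\Omega$ (including $\lambda=\infty$), which is the unimodular-completion criterion in \cite[Ch.~4]{V}. Since the matrix $\Psi$ in the statement is exactly this compound matrix for the concatenated factorization $(\tilde M,\tilde N)=\big(\diag(\tilde M_{[ii]}),[\tilde N_{[ij]}]\big)$ of $G$, item~1 amounts to showing that the only points of $\boC\setminus\Omega$ at which $\Psi$ can lose row rank are the unstable poles of $G$, and item~2 amounts to showing this rank behaviour on $\Lambda_G$ is unaffected by the choice of row factorizations.

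For item~1, the implication ``coprime $\Rightarrow$ (\ref{testolc})'' is immediate because $\Lambda_G\subseteq\boC\setminus\Omega$. For the converse I would fix $\lambda\in\boC\setminus\Omega$ with $\lambda\notin\Lambda_G$ and show $\rank\Psi(\lambda)=m$ automatically: each row factorization (\ref{row}) is left coprime, so the zeros of $\det\tilde M_{[ii]}$ coincide with the poles of the $i$-th block row of $G$; since $\lambda$ lies in the instability region and is not an unstable pole of $G$, it is not a pole of $G$ at all, hence every diagonal block $\tilde M_{[ii]}(\lambda)$ is nonsingular, and already the block-diagonal submatrix $\tilde M(\lambda)$ of $\Psi(\lambda)$ has rank $\sum_i m_i=m$. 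Together with the hypothesis $\rank\Psi(\lambda)=m$ on $\Lambda_G$, this gives full row rank of $\Psi$ on all of $\boC\setminus\Omega$; the point at infinity is covered because $G$ is strictly proper and each $\tilde M_{[ii]}(\infty)$ is invertible (coprimeness of (\ref{row}) at $\infty$ forces $[\tilde M_{[ii]}(\infty)\ \ 0]$ to have rank $m_i$). The cited criterion then yields left coprimeness of $(\tilde M,\tilde N)$.

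For item~2 I would invoke uniqueness of coprime factorizations up to a unimodular factor: any other left coprime factorization of the $i$-th block row is $\tilde M'_{[ii]}=U_i\tilde M_{[ii]}$, $\tilde N'_{[ij]}=U_i\tilde N_{[ij]}$ with $U_i$ unimodular over $\mathbb{A}$, i.e. $U_i,U_i^{-1}\in\mathbb{A}^{m_i\times m_i}$. Concatenating, the associated compound matrix becomes $\Psi'=\diag(U_1,\dots,U_{r_y})\,\Psi$, and $\diag(U_1,\dots,U_{r_y})$ is unimodular, hence nonsingular at every point of $\boC\setminus\Omega$, in particular on $\Lambda_G$. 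Left multiplication by a pointwise-nonsingular matrix preserves rank, so $\rank\Psi'(\lambda)=\rank\Psi(\lambda)$ for every $\lambda\in\Lambda_G$, and (\ref{testolc}) holds for one choice of row factorizations precisely when it holds for the other.

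I expect the main obstacle to be the clean justification of the crux step in item~1 --- that $\Psi$ can drop rank on $\boC\setminus\Omega$ only at an unstable pole of $G$ --- which rests on identifying the zeros of $\det\tilde M_{[ii]}$ with the poles of the $i$-th block row of $G$; this itself follows from the row coprimeness via the same pointwise-rank lemma (if $v\tilde M_{[ii]}(\lambda_0)=0$ with $v\neq0$ then $v\tilde N_{[ij]}(\lambda_0)\neq0$ for some $j$, so $\tilde M_{[ii]}^{-1}\tilde N_{[ij]}$ has a pole at $\lambda_0$). Some care is also needed with the behaviour at $\lambda=\infty$ and with the normalization of the $\tilde M_{[ii]}$ produced by the state-space construction of \cite{NETT:1984un,Khargonekar:1982vm}, but these are routine.
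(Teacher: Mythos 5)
Your proposal is correct and follows essentially the same route as the paper's own proof: both reduce item~1 to the pointwise full-row-rank (no Smith zeros in $\boC\setminus\Omega$) criterion for left coprimeness and observe that $\Psi$ can only lose rank at unstable poles of $G$ because the diagonal blocks $\tilde M_{[ii]}(\lambda)$ are invertible away from the poles of the corresponding block rows, and both prove item~2 via invariance of the rank under premultiplication by a block-diagonal unimodular matrix. Your additional care about the point at infinity and the explicit justification that zeros of $\det\tilde M_{[ii]}$ in the instability region coincide with poles of the $i$-th block row are welcome refinements of details the paper only sketches.
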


\begin{proof} The proof follows as a consequence of standard results in linear systems theory, so we only provide a sketch of the ideas. We start by invoking a result used in \cite{Oara:1999vz}  that the left factorization $(\tilde M, \tilde N)$  is coprime if and only if $\Psi$ has no Smith zeros\footnote{  A complex number $\lambda_0 \in \mathbb{C}$ is called a Smith zero of $\Psi$ if $\Psi(\lambda_0)$ is not full--row rank. } in $\mathbb{C} - \Omega$ . Hence, from the statement in 1), we are left to prove that any Smith zero of $\Psi$ in $\mathbb{C} - \Omega$ is a pole of $G$. The proof of 1) is concluded by noticing that if a given $\lambda_0$  in $\mathbb{C} - \Omega$ is not a pole of $G$ then, from the coprimeness of the row factorizations in (\ref{row}), $\tilde{M}(\lambda_0)$ is invertible and hence full rank, leading to the conclusion that $\Psi(\lambda_0)$ must also be full row rank, and hence $\lambda_0$ is not a Smith zero of $\Psi$.

It only remains to prove 2). The argument here follows from the fact that the set of {\em all} left coprime factorizations of any block--row of $G$ is given by (\ref{row}) up to a premultiplication by a unimodular TFM \cite[Ch.~4, Theorem~43]{V}. This in turn implies that $\Psi(\la)$ is unique, up to a premultiplication of a block--diagonal, unimodular TFM (having the same block partition as $\tilde M$), which does not alter the rank condition on $\Psi(\la)$ at any unstable $\la_0 \in (\mathbb{C}-\Omega)$.
\end{proof}

The corresponding test for the existence of input decoupled right coprime factorization of $G$ (Definition~\ref{IDRCF}) is analogous and is therefore omitted for brevity. 


\subsection{A State-Space Method to Compute an Input/Output Decoupled DCF}
We assume that $G$ admits an output decoupled left coprime factorization $G=\tilde M ^{-1} \tilde N$ and an input decoupled right coprime factorization $G= N M^{-1}$.  These factorizations must be pre-computed (using for instance the arguments described in Appendix I- A), so we consider the $\tilde M, M, \tilde N, N$ factors  fixed. Under these conditions, here we provide a computational algorithm to obtain the Input/Output Decoupled DCF of $G$  (Definition~\ref{IOdecDCF}), {\em i.e.} a DCF  

\bb{dcrel2}
\ba{cc}  Y &  X \\ -\tilde N &  \tilde M\ea
\ba{cc} M & - \tilde X \\ N &  \tilde Y \ea = I_{m+p}.
\ee

\noindent containing the fixed factors $\tilde M, M, \tilde N, N$.  The fact that the coprime factorization (\ref{dcrel2})  always exists is guaranteed by \cite[Ch.~4, Theorem~60]{V} but since we are not aware of any method to actually compute it, we will present one here.

 The following additional notation is needed: given any $n$--dimensional state--space representation $(A$, $B$, $C$, $D)$ of  a LTI system, its input--output description  is given by the {\em transfer function matrix} (TFM) which is the $m\times p$ matrix with real, rational functions entries

\begin{equation}
\label{state-space}
G =  \ba{c|c}A & B \\ \hline C & D \ea \overset{def}{=} D + C(\la I_n - A)^{-1}B,
\end{equation}

\noindent where  $A, B, C, D$ are $n\times n$, $n \times p$, $m \times n$, $m \times p$
real matrices, respectively  while  $n$ is also called {\em the order} of the realization.  For elementary notions in linear systems theory,
such as controllability, observability, detectability, we refer to \cite{Wonham}, or any other standard text book in linear systems.



We have started out with  an output decoupled left  coprime factorization (Definition~\ref{ODLCF})   of the plant $G=\tilde{M}^{-1}   \tilde{N}$.
The state--space representation of this  left  coprime factorization can be obtained according to Proposition~\ref{Oara:1999vz99}~{\bf A)} from Appendix II,  starting from a {\em certain}  stabilizable state--space realization of $G$  (which we take without loss of generality to be in the Kalman Structural Decomposition, \cite{Kalman:1962uv}) and which we consider fixed:

\begin{equation}
\label{Gdetect}
G=\ba{cccc|c} \star & \star & \star   & \star &  \star  \\  
                                 O & A_{22} & O   & A_{24} &  B_2  \\
                                 O & O          & \star & \star &  O   \\
                                  O & O &  O & A_{44} &  O  \\ 
                    \hline   O & C_2 &  O & C_4 &  D  \\ \ea
\end{equation}

\noindent  with the $\star$ denoting parts of the realizations that are of no importance in this proof. Continuing with Proposition~\ref{Oara:1999vz99}~{\bf A)} from Appendix~II, there also  exists an invertible matrix $U$ and a feedback matrix $F$  (both fixed)   such that  (with $F$ partitioned in accordance with (\ref{Gdetect})) we get

\begin{equation}
\label{LCFSS}
\ba{cc} - \tilde N &  \tilde M \ea=U{}^{-1}\ba{cccc|cc} 
                                 \star & \star & \star   & \star &  \star  & \star  \\  
                                 O & A_{22} -F_2C_2 & O   & A_{24}-F_2 C_4 &  B_2  - F_2 D & F_2 \\
                                 O & \star   & \star & \star &  \star & \star  \\
                                  O & -F_4 C_2 &  O & A_{44} -F_4 C_4 &   -F_4 D & F_4 \\ 
                    \hline   O &  -C_2 &  O & -C_4 &  -D & I \\ \ea
\end{equation}

\noindent with

\begin{equation} 
\label{Spectrul1}
\Lambda \left(  \ba {cc} A_{22} -F_2 C_2   & A_{24}-F_2 C_4  \\
                              -F_4 C_2 & A_{44} - F_4 C_4  \\   \ea \right) \subset \Omega.
\end{equation}

\noindent  Note that since (\ref{Gdetect}) is stabilizable it follows that $\Lambda(A_{44})\subset \Omega$. After removing the unobservable part from  (\ref{LCFSS}) we get that
\begin{equation}
\label{LCFSSmin}
\ba{cc} - \tilde N  &  \tilde M \ea=U {}^{-1}\ba{cc|cc} 
                          A_{22} - F_2 C_2   & A_{24}-F_2 C_4 &  B_2  - F_2 D & F_2 \\
                              -F_4 C_2 & A_{44} - F_4 C_4 &   -F_4 D & F_4 \\ 
                    \hline     -C_2  & -C_4 &  -D & I \\ \ea
\end{equation}

We have also started out with  an input decoupled right coprime factorization (Definition~\ref{IDRCF})  $G=NM^{-1}$. According to Proposition~\ref{Oara:1999vz99}~{\bf B)} in Appendix~II, there exists a {\em certain} detectable state--space realization of $G$ (which we take without loss of generality to be in the Kalman Structural Decomposition) and which we also consider fixed: 

\begin{equation}
\label{Gstabiliza}
G=\ba{cccc|c} A_{11}  & A_{12} & \star   & \star &  B_1  \\  
                                 O & A_{22} & O   & \star &  B_2  \\
                                 O & O          & \star & \star &  O   \\
                                  O & O &  O & \star &  O  \\ 
                    \hline   O & C_2 &  O & \star &  D  \\ \ea
\end{equation}

\noindent with the $\star$ denoting parts of the realization that are of no importance here.  Any two realizations of $G$  will always have the same  the controllable and observable part,  up to a similarity transformation - that is to say that if the controllable and stabilizable part of  (\ref{Gdetect}) is  $(A_{22}, B_2, C_2,D)$ then the controllable and stabilizable part of  (\ref{Gstabiliza}) must be  $(Z^{-1}A_{22}Z, Z^{-1} B_2,$ $C_2Z,D)$, for some invertible, real matrix $Z$ . We can apply this similarity transformation adequately on (\ref{Gstabiliza}), such that the the controllable and stabilizable part   $(A_{22}, B_2, C_2,D)$, appears identical on both realizations (\ref{Gdetect}) and (\ref{Gstabiliza}), respectively. This simplifies future computations.

According to  Proposition~\ref{Oara:1999vz99}~{\bf B)} from Appendix~II, along with realization (\ref{Gstabiliza}), there also  exists an invertible matrix $V$ and a feedback matrix $L$  (both fixed)   such that  (with $L$ partitioned in accordance with (\ref{Gstabiliza}))

\begin{equation}
\label{RCFSS}
\ba{c} M \\ N \\   \ea=\ba{cccc|c} 
                                 A_{11}- B_1L_1 & A_{12} -B_1L_2 & \star   & \star &  B_1  \\  
                             -  B_2 L_1 & A_{22} - B_2L_2 & \star   & \star &  B_2 \\    
                                 O & O   & \star & \star  & O  \\
                                  O & O   & O & \star  & O  \\
                    \hline   -L_1 &  -L_2 &  \star & \star &   I \\ 
                                 - DL_1 & C_2- D L_2 & \star & \star &D \\  \ea V
\end{equation}

\noindent with

\begin{equation} 
\label{Spectrul2}
\Lambda \left(    \ba{cc}         A_{11}- B_1L_1 & A_{12} - B_1L_2  \\  
                               - B_2 L_1 & A_{22}-B_2 L_2 \ea \right) \subset \Omega, 
\end{equation}
\noindent Note that since (\ref{Gstabiliza}) is detectable it follows that $\Lambda(A_{11})\subset \Omega$.  After removing the uncontrollable part from  (\ref{RCFSS}) we get that

\begin{equation}
\label{RCFSSmin}
\ba{c} M \\ N \\   \ea=\ba{cc|c} 
                                 A_{11} - B_1L_1 & A_{12} - B_1L_2 &  B_1  \\  
                               - B_2 L_1 & A_{22}-B_2L_2 &  B_2 \\    
                    \hline   -L_1 & - L_2 &    I \\ 
                                 - DL_1 & C_2- D L_2 &  D \\  \ea V
\end{equation}

We have come now  to the following stabilizable and detectable state--space realization of $G$, which we consider fixed:

\begin{equation}
\label{GMare}
G=\ba{ccc|c} A_{11}  & A_{12}    & \star &  B_1  \\  
                                 O & A_{22}   & A_{24} &  B_2  \\
                                  O & O  & A_{44} &  O  \\ 
                    \hline   O & C_2 & C_4 &  D  \\ \ea
\end{equation}

Since $\Lambda(A_{11})\subset\Omega$ we get that (\ref{GMare}) is detectable and since $\Lambda(A_{44})\subset\Omega$ we get that (\ref{GMare}) is stabilizable, hence (\ref{GMare}) satisfies the hypothesis  from Theorem~\ref{LucicThm} {\bf iii)} from Appendix~II. Starting from realization (\ref{GMare}) (which is fixed),   (\ref{dc1final}) and (\ref{dc2final}) yield a valid DCF of $G$ for  any stabilizing feedback matrices $F^+$ and $L^+$ (partitioned in accordance with (\ref{GMare}) and satisfying Theorem~\ref{LucicThm} {\bf ii)} from Appendix~II), and any invertible matrix $T^+$ satisfying the hypothesis of Theorem~\ref{LucicThm} {\bf i)}.  We will denote the factors of this particular DCF with $\big(M^+, N^+$, $\tilde M^+, \tilde N^+$, $X^+, Y^+$, $\tilde X^+, \tilde Y^+\big)$. 
After removing the unobservable part, the $\tilde M^+$ factor will be (the computation are similar with those for getting from (\ref{LCFSS}) to (\ref{LCFSSmin}))

\begin{equation}
\label{Dtilda}
  \tilde M^+  ={U^+}^{-1}\ba{cc|c} 
                          A_{22} - F^+_2 C_2   & A_{24}-F^+_2 C_4 &   F^+_2 \\
                              -F^+_4 C_2 & A_{44} - F^+_4 C_4 &   F^+_4\\ 
                    \hline     -C_2  & -C_4 &  I \\ \ea 
\end{equation}

\noindent where

\begin{equation} 
\label{Spectrul3}
\Lambda \left(  \ba {cc} A_{22} -F^+_2 C_2   & A_{24}-F^+_2 C_4  \\
                              -F^+_4 C_2 & A_{44} - F^+_4 C_4  \\   \ea \right) \subset \Omega.
\end{equation}

\noindent and $U^+$ is a real, invertible matrix. We compute the factor $\tilde \Theta \overset{def}{=} \tilde M  {\tilde {M^+}} ^{-1}$ using the state--space realizations from (\ref{LCFSS}) and (\ref{Dtilda}) respectively and we get

\begin{equation}
\label{nasoala}
 \tilde \Theta  =U^{-1}\ba{cccc|c} 
                          A_{22} -F_2C_2   & A_{24}-F_2 C_4 &   F_2C_2 & F_2C_4 & F_2 \\
                             - F_4 C_2 & A_{44} - F_4 C_4 &   F_4C_2 & F_4 C_4 & F_4 \\ 
                              O & O & A_{22} & A_{24}  &F^+_2 \\
                              O & O & O & A_{44}  & F^+_{4} \\
                    \hline     -C_2  & -C_4 &   C_2 & C_4 & I \\ \ea U^+.
\end{equation}

  After removing the unobservable part from (\ref{nasoala}) we get that

\begin{equation}
\label{Deltatilda}
  \tilde \Theta  =U^{-1}\ba{cc|c} 
                          A_{22} -F_2C_2   & A_{24}-F_2 C_4 &   F_2 -F^+_2 \\
                              -F_4 C_2 & A_{44} - F_4 C_4 &   F_4 - F^+_4\\ 
                              \hline     -C_2  & -C_4 &  I \\ \ea U^+
\end{equation}

\noindent and consequently

\begin{equation}
\label{Deltatildainversa}
  \tilde \Theta^{-1}  ={U^+}^{-1}\ba{cc|c} 
                          A_{22} -F^+_2C_2   & A_{24}-F^+_2 C_4 &   F_2 -F^+_2 \\
                              -F^+_4 C_2 & A_{44} - F^+_4 C_4 &   F_4- F^+_4\\ 
                              \hline     C_2  & C_4 &  I \\ \ea U,
\end{equation}

\noindent which combined with (\ref{Spectrul1}) and (\ref{Spectrul3}) shows that $\tilde \Theta$ is unimodular.  A similar line of reasoning can be used to prove that $\Theta\overset{def}{=}{M^+}^{-1}M$ is unimodular.

 Finally, compute
\begin{equation}
\label{ArbDenEq2}
\Big(\ba{cc}\Theta^{-1} & O\\O & \tilde{\Theta}\ea\ba{cc} {\tilde Y^+} & {\tilde X^+} \\ -{\tilde N^+} &  {\tilde M^+}\ea\Big)
\Big(\ba{cc} M^+ & - X^+ \\ N^+ &  Y^+ \ea \ba{cc}\Theta & O\\O & \tilde{\Theta}^{-1}\ea\Big)= I_{m+p}
\end{equation}

\noindent which is still a  DCF of $G$  in its own right, due to the unimodularity of $\Theta$ and $\tilde \Theta$. Plugging in the definitions of $\tilde \Theta$ and $\Theta$ into (\ref{ArbDenEq2}) yields

\begin{equation}
\label{ArbDenEq3}
\ba{cc} \Theta^{-1} {\tilde Y^+} &  \Theta^{-1} {\tilde X^+} \\ -{\tilde N} &  {\tilde M} \ea
\ba{cc} M & - X^+ {\tilde \Theta}^{-1} \\ N & Y^+   {\tilde \Theta}^{-1}  \ea= I_{m+p}
\end{equation}

\noindent which is an input/output decoupled DCF of $G$ and the algorithm ends.

\section*{Appendix II}

\begin{theorem}
\cite[Theorem 1]{Lucic:2001gk}
\label{LucicThm}
Let $G$ be some proper $m \times p$ TFM. The class of {\em all} DCFs (\ref{dcrel}) of $G$ over $\Omega$ is given by

\label{dcfinal}
\bb{dc1final}
\ba{cc} M &  -\tilde X \\ N &   \tilde Y \ea = \ba{c|cc} A - BL  & B  & F \\ \hline -L & I & 0 \\ C-DL & D & I \ea
T \: ,\quad
\ee
\bb{dc2final}
\ba{cc}  Y &  X \\ -\tilde N &  \tilde M \ea
 = T^{-1}\ba{c|cc} A - FC & B-FD & F \\ \hline L & I & 0 \\
-C & -D & I \ea,
\ee

\noindent where $A, B, C, D, F, L$ and $T$ are real matrices accordingly dimensioned such that

\noindent {\bf \em i)} $T = \ba{cc}V & W \\ O  & U \ea$  has its diagonal $p \times p$ block $V$ and $m \times m$ block $U$ respectively, invertible,

\noindent {\bf \em ii)} $F$ and $L$ are feedback--matrices such that $\displaystyle \La(A  -BL) \cup \La(A -FC) \subset \Omega$,

\noindent {\bf \em iii) }
$G =  \ba{c|c}A & B \\ \hline C & D \ea$ is a stabilizable and detectable realization.
\end{theorem}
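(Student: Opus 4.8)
The plan is to prove the stated equality of sets in both directions. Write $P \overset{def}{=} \ba{cc} M & -\tilde X \\ N & \tilde Y \ea$; by the B\'ezout identity (\ref{dcrel}) a collection of eight stable TFMs is a DCF of $G$ over $\Omega$ exactly when $P$ is square, both $P$ and $P^{-1}=\ba{cc} Y & X \\ -\tilde N & \tilde M \ea$ are stable, and the $(1,1),(2,1)$ blocks of $P$ reproduce $G=NM^{-1}$ while the $(2,1),(2,2)$ blocks of $P^{-1}$ reproduce $G=\tilde M^{-1}\tilde N$. So I would recast the claim as: (sufficiency) the formulas (\ref{dc1final})--(\ref{dc2final}) always produce such a $P$; (necessity) every such $P$ is produced by some $(A,B,C,D,F,L,T)$ meeting (i)--(iii).

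For sufficiency I would first take $T=I_{m+p}$: then (\ref{dc1final})--(\ref{dc2final}) is the classical state--space doubly coprime factorization of \cite{NETT:1984un,Khargonekar:1982vm}, and I would re-derive (\ref{dcrel}) by cascading the two state--space systems and applying the standard similarity transformation exhibiting the $2n$ cascade modes as $n$ uncontrollable-from-the-input modes plus $n$ unobservable modes (all in $\Omega$ by (ii)), so the cascade collapses to its feedthrough $\ba{cc} I & 0 \\ -D & I \ea \ba{cc} I & 0 \\ D & I \ea = I_{m+p}$; the factorizations $G=N_0M_0^{-1}=\tilde M_0^{-1}\tilde N_0$ follow from the usual state-feedback / output-injection cancellation formulae and stability of all factors from (ii). For general $T$, let $P_0,P_0^{-1}$ be the TFMs obtained with $T=I$: the general formulas then yield $P=P_0T$ and $T^{-1}P_0^{-1}$, so $(T^{-1}P_0^{-1})(P_0T)=I_{m+p}$ preserves (\ref{dcrel}); crucially, the block-upper-triangular form $T=\ba{cc} V & W \\ 0 & U \ea$ with $V,U$ invertible gives $M=M_0V$, $N=N_0V$ and $\tilde M=U^{-1}\tilde M_0$, $\tilde N=U^{-1}\tilde N_0$, so $NM^{-1}=N_0M_0^{-1}=G$ and $\tilde M^{-1}\tilde N=\tilde M_0^{-1}\tilde N_0=G$ are undisturbed. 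This direction is mechanical.

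For necessity, start from an arbitrary DCF and its $P$ (with $P,P^{-1}$ stable) and pick a minimal realization $\big(\mathcal A, [\,\mathcal B_1\ \mathcal B_2\,], \ba{c} \mathcal C_1 \\ \mathcal C_2 \ea, \ba{cc} \mathcal D_{11} & \mathcal D_{12} \\ \mathcal D_{21} & \mathcal D_{22} \ea \big)$ of $P$; minimality and stability of $P$ give $\Lambda(\mathcal A)\subset\Omega$, and the state-space inversion formula gives a realization of $P^{-1}$ of the same order $n$ — hence minimal — with state matrix $\mathcal A-\mathcal B\mathcal D^{-1}\mathcal C$, so $\Lambda(\mathcal A-\mathcal B\mathcal D^{-1}\mathcal C)\subset\Omega$ as well. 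Since $G$ is proper its DCF denominators are biproper, so $\mathcal D_{11}=M(\infty)$ is invertible; matching $\mathcal D=\ba{cc} I & 0 \\ D & I \ea T$ forces $D=\mathcal D_{21}\mathcal D_{11}^{-1}=G(\infty)$ and $T=\ba{cc} \mathcal D_{11} & \mathcal D_{12} \\ 0 & \mathcal D_{22}-\mathcal D_{21}\mathcal D_{11}^{-1}\mathcal D_{12} \ea$, which is block upper triangular with invertible diagonal blocks (the Schur complement being invertible because $\mathcal D$ and $\mathcal D_{11}$ are). Reading off the remaining blocks of (\ref{dc1final}) defines $L=-\mathcal C_1$, $B=\mathcal B_1\mathcal D_{11}^{-1}$, $A=\mathcal A-B\mathcal C_1$, $C=\mathcal C_2-D\mathcal C_1$, and $F=(\mathcal B_2-B\mathcal D_{12})(\mathcal D_{22}-\mathcal D_{21}\mathcal D_{11}^{-1}\mathcal D_{12})^{-1}$.

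It then remains to check that this data satisfies (i)--(iii) and reproduces the given DCF. Condition (i) holds by construction; evaluating (\ref{dc1final}) with these matrices returns exactly the chosen realization of $P$, hence exactly $(M,N,\tilde X,\tilde Y)$, and by the already-proved sufficiency direction (\ref{dc2final}) with the same data equals $P^{-1}$, hence returns exactly $(Y,X,\tilde N,\tilde M)$. For (ii): $A-BL=\mathcal A$ is stable, and $A-FC$ is the state matrix of the order-$n$ realization (\ref{dc2final}) of $P^{-1}$, which is therefore minimal, so its eigenvalues are the stable poles of $P^{-1}$. For (iii): the state-feedback cancellation formula gives $D+C(\lambda I-A)^{-1}B=NM^{-1}=G$, so $(A,B,C,D)$ realizes $G$, and $(A,B)$ is stabilizable because $A-BL$ is stable while $(C,A)$ is detectable because $A-FC$ is stable. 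I expect the necessity direction to be the main obstacle: recognizing that the correct object to realize is $P$, pinning down invertibility of $T$ and of $\mathcal D_{11}$ (the biproperness bookkeeping), and extracting $\Lambda(A-FC)\subset\Omega$ from minimality of the inverse realization rather than by direct computation; the sufficiency direction is routine state-space manipulation.
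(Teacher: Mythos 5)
The paper does not actually prove this statement: it is imported wholesale as Theorem~1 of \cite{Lucic:2001gk} (the Appendix only remarks that Corollary~\ref{Oara:1999vz99} ``follows on the lines of'' it), so there is no in-paper argument to compare yours against. Judged on its own, your proof is correct and is essentially the standard two-directional argument one would expect for this parametrization. The sufficiency half (cascade collapse to the identity feedthrough for $T=I$, then observing that right-multiplication of $P_0$ by a block upper-triangular $T$ and left-multiplication of $P_0^{-1}$ by $T^{-1}$ leaves both fractions $NM^{-1}=N_0M_0^{-1}$ and $\tilde M^{-1}\tilde N=\tilde M_0^{-1}\tilde N_0$ unchanged) is routine and your bookkeeping checks out. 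The necessity half is where the content lies, and you handle its two delicate points correctly: (a) invertibility of $\mathcal D_{11}=M(\infty)$, which indeed follows from evaluating $YM+XN=I$ at infinity together with $N=GM$ and properness of $G$, and which is what makes $D$, $T$ and the Schur complement $U$ well defined; and (b) the stability of $A-FC$, which you extract from the fact that biproperness of $P$ forces $\deg P^{-1}=\deg P$, so the order-$n$ realization of $P^{-1}$ produced by (\ref{dc2final}) is minimal and its spectrum consists of poles of the stable TFM $P^{-1}$ — this is cleaner than trying to compute $\Lambda(A-FC)$ directly. One small stylistic remark: your appeal to ``the already-proved sufficiency direction'' to identify (\ref{dc2final}) with $P^{-1}$ is legitimate only because the cascade identity $(\ref{dc2final})\cdot(\ref{dc1final})=I_{m+p}$ is purely algebraic and does not use condition (ii); it would be worth saying so explicitly, since at that point in the necessity argument (ii) has not yet been verified. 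With that caveat made explicit, I see no gap.
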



\begin{coro}
\label{Oara:1999vz99}
Let $G$ be an arbitrary $m\times p$ TFM and $\Omega$ a domain in $\boC$.

\noindent {\bf \em A)} The class of {\em all} left coprime factorizations of $G$ over $\Omega$, $G = \tilde{M}^{-1}\tilde{N}$, is given by
\bb{dc2}
\ba{cc} \tilde N &  \tilde M \ea
 = U^{-1}\ba{c|cc} A - FC & B-FD & -F \\ \hline
C & D & I \ea,
\ee

\noindent where $A, B, C, D, F$ and $U$ are real matrices accordingly dimensioned such that

\noindent {\bf \em i)}  $U$ is any $m \times m$ invertible matrix,

\noindent {\bf \em ii)}  F is any feedback matrix that  allocates the observable modes of the $(C,A)$ pair to $\Omega$,

\noindent {\bf \em iii)} $G =  \ba{c|c}A & B \\ \hline C & D \ea$ is a stabilizable realization.

\noindent {\bf \em B)} The class of {\em all} right coprime factorizations of $G$ over $\Omega$, $G=NM^{-1}$ is given by
\bb{dc1}
\ba{c} M \\ N \ea = \ba{c|c} A - BL  & B   \\ \hline -L & I  \\ C-DL & D \ea V
\ee

\noindent where $A, B, C, D, L$ and $V$ are real matrices accordingly dimensioned such that

\noindent {\bf \em i)}  $V$ is any $p \times p$ invertible matrix,

\noindent {\bf \em ii)} L is any feedback matrix that  allocates the controllable modes of the $(A,B)$ pair to $\Omega$,

\noindent {\bf \em iii)} $G =  \ba{c|c}A & B \\ \hline C & D \ea$ is a detectable realization.
\end{coro}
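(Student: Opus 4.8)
The plan is to read the statement off \thmref{LucicThm} by extracting the bottom $m$ block-rows (for part A)), respectively the left $p$ block-columns (for part B)), of the doubly coprime factorizations parametrized there, once the hypotheses have been reconciled: \thmref{LucicThm} assumes a realization that is stabilizable \emph{and} detectable, whereas part A) assumes only stabilizability and part B) only detectability. I would treat part A) in detail; part B) is the dual statement and is obtained analogously (equivalently, by applying part A) to $G^{T}$ and transposing back, up to the obvious reordering of the $M$ and $N$ blocks), with the feedback $F$ becoming $L$ and $U$ becoming $V$.

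For part A), the first step is to observe that the right-hand side of the claimed formula depends only on the observable subsystem of $(A,B,C,D)$ with respect to $C$: the unobservable subspace $\mathcal{N}$ of $(C,A)$ is $A$-invariant, and since $Cx=0$ for $x\in\mathcal{N}$ it is also $(A-FC)$-invariant with $(A-FC)|_{\mathcal{N}}=A|_{\mathcal{N}}$; hence deleting $\mathcal{N}$ from the given realization of $G$ and from the realization $U^{-1}\left[\begin{smallmatrix}A-FC & B-FD & -F\\ C & D & I\end{smallmatrix}\right]$ of $[\,\tilde N\ \ \tilde M\,]$ leaves both transfer function matrices unchanged. We may therefore assume $(C,A)$ observable, hence in particular detectable. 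A one-line Popov--Belevitch--Hautus argument then shows that the observable part of a stabilizable pair is itself stabilizable, because in the observability decomposition $A=\left[\begin{smallmatrix}A_{11}&A_{12}\\ 0&A_{22}\end{smallmatrix}\right]$, $B=\left[\begin{smallmatrix}B_{1}\\ B_{2}\end{smallmatrix}\right]$, $C=[\,0\ \ C_{2}\,]$, every left null vector $w^{*}$ of $[\,\lambda_{0}I-A_{22}\ \ B_{2}\,]$ yields the nonzero left null vector $[\,0\ \ w^{*}\,]$ of $[\,\lambda_{0}I-A\ \ B\,]$, so stabilizability forces $\lambda_{0}\in\Omega$. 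After this reduction $(A,B,C,D)$ meets hypothesis iii) of \thmref{LucicThm}, and the present hypothesis ii) now reads $\Lambda(A-FC)\subset\Omega$.

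I would then invoke \thmref{LucicThm} with $T=\diag(I_{p},U)$, the given $F$, and any $L$ with $\Lambda(A-BL)\subset\Omega$ (available by stabilizability). Premultiplying \eqref{dc2final} by the bottom $m$ rows of $T^{-1}$, namely $[\,0\ \ U^{-1}\,]$, produces $[-\tilde N\ \ \tilde M\,]=U^{-1}\left[\begin{smallmatrix}A-FC & B-FD & F\\ -C & -D & I\end{smallmatrix}\right]$, which after absorbing signs is precisely the asserted formula for $[\,\tilde N\ \ \tilde M\,]$ (the two state-space forms of $\tilde M$ agree because $I-C(\lambda I-A+FC)^{-1}F=I+C(\lambda I-A+FC)^{-1}(-F)$). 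Since \eqref{dc2final} is a doubly coprime factorization, $(\tilde N,\tilde M)$ is a left coprime factorization; thus every pair of the stated form is an LCF. Conversely, every left coprime factorization of $G$ embeds into a doubly coprime factorization by \cite[Ch.~4, Theorem~60]{V}; such a DCF has the form of \thmref{LucicThm} for some stabilizable-and-detectable --- hence in particular stabilizable --- realization whose $F$ satisfies $\Lambda(A-FC)\subset\Omega$ and a fortiori hypothesis ii), and reading off its bottom block-rows displays the LCF in the claimed form.

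The step I expect to be the crux is this reconciliation of hypotheses --- making rigorous that the coprime factor pair depends only on the observable (respectively controllable) part of the realization, that this part inherits stabilizability (respectively detectability), and matching \thmref{LucicThm} to the exact formula here through the choice $T=\diag(I_{p},U)$ (respectively $T=\diag(V,I_{m})$) and the attendant sign bookkeeping.
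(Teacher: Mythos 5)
Your proposal is correct, and it is considerably more detailed than what the paper provides: the paper's entire ``proof'' of this corollary is the single sentence that it ``follows on the lines of Theorem~\ref{LucicThm},'' i.e.\ an invitation to rerun the (externally cited) proof of that theorem mutatis mutandis. You instead \emph{deduce} the corollary from the \emph{statement} of Theorem~\ref{LucicThm}, which is a genuinely different and more self-contained route. The price of that route is exactly the step you identify as the crux: Theorem~\ref{LucicThm} assumes a stabilizable \emph{and} detectable realization, whereas part A) assumes only stabilizability, so you must first show that $[\,\tilde N\ \ \tilde M\,]$ depends only on the observable part of $(A,B,C,D)$ (the unobservable subspace being both $A$- and $(A-FC)$-invariant with $C$ vanishing on it) and that this observable part inherits stabilizability (your PBH argument is correct). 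After that reduction, the specialization $T=\mathrm{diag}(I_p,U)$ in (\ref{dc2final}) and the sign bookkeeping do produce exactly (\ref{dc2}), and the converse direction --- embedding an arbitrary LCF into a DCF via \cite[Ch.~4, Theorem~60]{V} and reading off the bottom block rows --- is sound; part B) follows by the dual reduction (or by transposition). In short: no gap, and your argument supplies the reconciliation of hypotheses that the paper leaves entirely implicit.
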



The proof of Corollary~\ref{Oara:1999vz99} follows on the lines of Theorem~\ref{LucicThm}.

\bibliographystyle{IEEEtran}

\bibliography{AutomatedRefs,AutomatedRefs2,ManualEntries}

\end{document}